\documentclass[11pt,oneside,reqno]{amsart}

\usepackage{graphicx}
\usepackage{pict2e}
\usepackage{amssymb}
\usepackage{amsthm}                
\usepackage[margin=1in]{geometry}  
\usepackage{graphics}
\usepackage{epstopdf}
\usepackage{amsmath}
\usepackage{graphicx}
\usepackage{subfigure}
\usepackage{latexsym}
\usepackage{amsmath}
\usepackage{amsfonts}
\usepackage{amssymb}
\usepackage{tikz}
\usepackage{mathdots}
\usepackage{comment}
\usepackage{float}
\usepackage[mathscr]{euscript}
\usepackage{enumerate}
\usepackage{epsfig}
\usepackage { graphicx }
\usepackage { hyperref }
\usepackage { graphics }
\usepackage { graphicx }
\usepackage{xparse}
\usepackage{epstopdf}
\usepackage {eufrak}
\usepackage[utf8]{inputenc}
\usepackage{longtable}

\usepackage[lite]{amsrefs}

\theoremstyle{definition}
\newtheorem{theorem}{Theorem}[section]
\newtheorem{lemma}[theorem]{Lemma}

\theoremstyle{definition}

\theoremstyle{remark}
\newtheorem{remark}[theorem]{Remark}
\numberwithin{equation}{section}

\numberwithin{equation}{section}

\begin{document}


\title{Segmenting a Surface Mesh into Pants Using Morse Theory}


\author{Mustafa Hajij}
\address{Department of Mathematics, University of South Florida}
\email{mhajij@usf.edu}

\author{Tamal Dey}
\address{Department of Computer Science, Ohio State University}
\email{tamaldey@cse.ohio-state.edu}

\author{Xin Li}
\address{Department of Electrical Engineering, Louisiana State University}
\email{xinli@cct.lsu.edu }







\maketitle
\begin{abstract}
A pair of pants is a genus zero orientable surface with three boundary components.
A pants decomposition of a surface is a finite collection of 
unordered pairwise disjoint simple closed curves embedded in the surface 
that decompose the surface into pants. In this paper we present two
Morse theory based algorithms for pants decomposition of a surface mesh. 
Both algorithms operates on a choice of an appropriate Morse function on the
surface. The first algorithm uses this Morse function to 
identify \textit{handles} that
are glued systematically to obtain a pant 
decomposition. The second algorithm
uses the Reeb graph of the Morse function
to obtain a pant decomposition. Both algorithms work for surfaces
with or without boundaries. Our preliminary implementation of the two
algorithms shows that both algorithms run in much less time than an 
existing state-of-the-art method, and the Reeb graph based algorithm
achieves the best time efficiency. Finally, we demonstrate the robustness of our algorithms against noise.

\end{abstract}

\begin{figure}[h]\centering
  \includegraphics[width=0.51\textwidth]{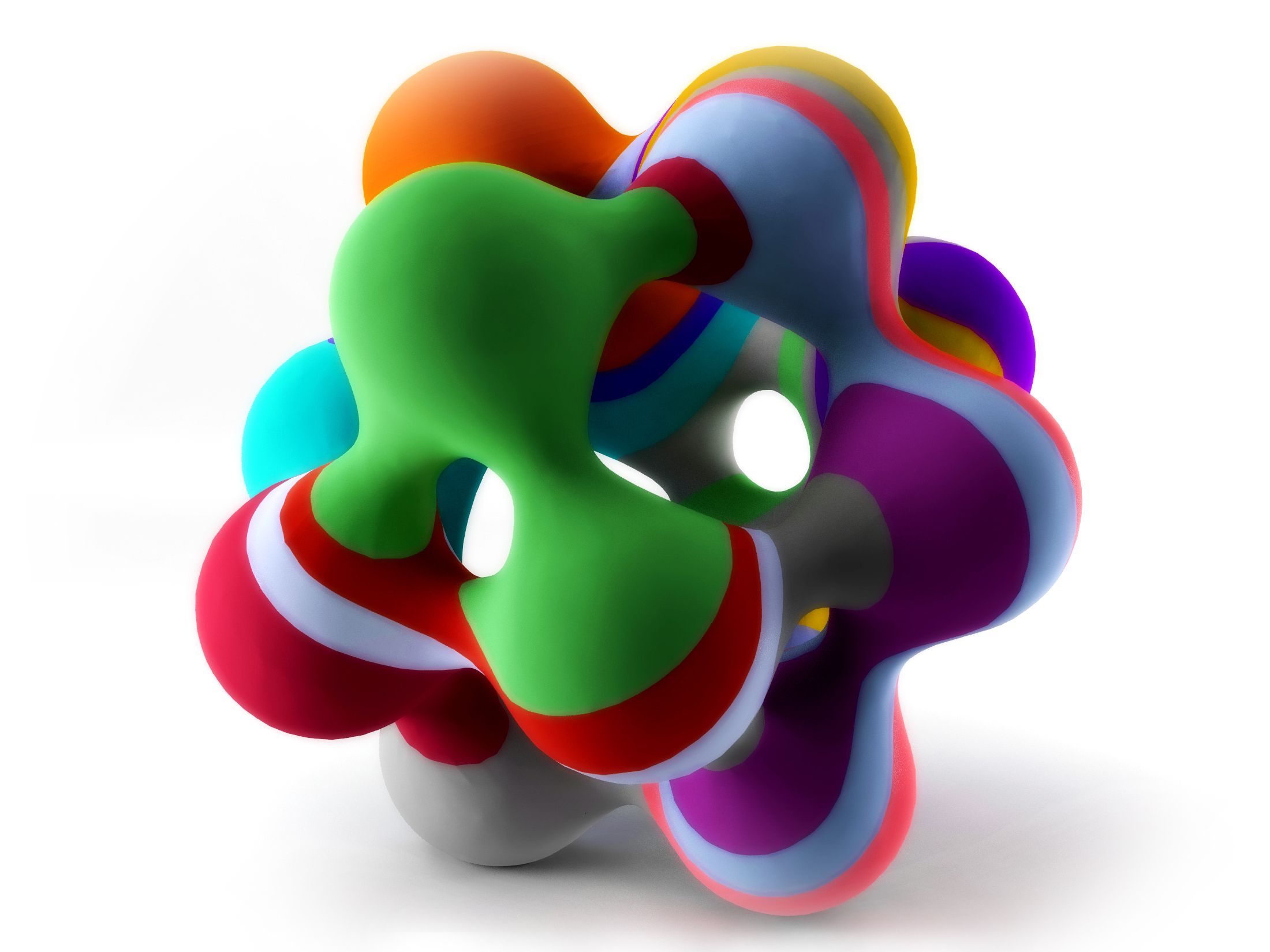}
  \caption{A pants decomposition of a high genus surface by our algorithm.}
  \label{fig:square}
\end{figure}



\section{Introduction}


Segmenting a surface mesh into simple pieces for further processing 
is a fundamental problem in many mesh processing applications
such as texture mapping~\cite{levy2002least}, collision detection~\cite{li2001decomposing}, skeletonization~\cite{biasotti2003overview} and three-dimensional shape retrieval~\cite{zuckerberger2002polyhedral}.
In surface matching, authors
in \cite{li2009surface} showed the use of a particular type of 
mesh segmentation called
a {\em pants decomposition}, see Figure~\ref{fig:square}. It
derives its name from its constituent piece termed
a {\em pair of pants}, which, up to topology, is a genus zero orientable surface with 
$3$ boundary components. Besides surface matching, pants decomposition has found applications in  surface classification and indexing \cite{jin2009computing}, and consistent mesh parametrization \cite{kwok2012constructing}. This type of segmentation can be viewed as a \textit{common base domain segmentation} \cite{kwok2012constructing}, where one partitions the mesh into parts with a common property such as having the same topology. Authors in 
\cite{li2009surface} presented a
pants decomposition algorithm based on the computations of certain basis cycles 
in the first homology group of the input surface.
Other pants decomposition related algorithms proposed in the graphics literature can be found in ~\cite{zhang2012optimizing} where the authors enumerates different classes of pants decompositions. However, these methods rely on computing certain curves on the surface called \textit{Handle and Tunnel loops} ~\cite{dey2007computing,dey2013efficient}. Computing such curves is expensive and more importantly requires the surface to be embedded in $\mathbb{R}^3$ and not have any boundary.  Morse theory allows our algorithms to get rid of these two constraints. Moreover, pants decomposition relies on finding a collection curves on the surface with certain topological properties. Finding these curves and manipulating them is a difficult problem. Morse theory allows us to avoid defining these curves explicitly by realizing them implicitly as level curves of a Morse function.

Many segmentation algorithms have been proposed in the graphics literature includes~\cite{mangan1999partitioning, shamir2008survey,li2009surface,chen2009benchmark,shlafman2002metamorphosis,attene2006hierarchical,yu2015geometry,li2016geometry}. One reason for the variety of segmentation algorithms suggested in the literature is that there is no one universal good algorithm that suits all applications. The techniques used in the algorithms are related to other areas in computer graphics such as image segmentation~\cite{reynolds1995robust,tomasi1998bilateral} and machine learning~\cite{karypis1998software,cover1967nearest}. For good surveys on various segmentation algorithms see~\cite{attene2006mesh,shamir2008survey}. 

In this work we propose two algorithms based on Morse theory for computing 
a pants decomposition of an input surface mesh. The Morse theory 
connects 
the geometry and topology of manifolds via
the critical values/points of a specific class of real-valued functions called
Morse functions. Intuitively, given such a function $f$ on a manifold $M$, 
Morse theory studies the topological changes of the level sets of $f$. 
These level sets change in topology only at the critical values. 
Consequently, the space in between two consecutive critical levels becomes
a product space of a fiber with an interval. This fact is used to decompose a surface into pants and cylinders, and the latter parts are glued inductively into pants.
The second algorithm exploits a reduced structure called {\em Reeb graph} ~\cite{reeb1946points}
derived from a function $f$ on the surface. The Reeb graph of $f$
is a quotient space derived from $M$ and $f$. For a Morse
function, all vertices in the Reeb graph have 
valence $3$ except the
extrema. We exploit this property to compute a pair of pant for every
degree-3 vertex of the Reeb graph and treat the extrema specially. 

Morse theory, in its original form, assumes smooth settings ~\cite{milnor1963morse}. 
For operating on surface meshes, we need a piecewise linear (PL)
version of the theory. 
We make this transition using the PL Morse theory
proposed by ~\cite{critical1967}. 
We show how one can compute a function on a surface mesh satisfying the
specific property that our algorithms require. Furthermore, we extend
our basic algorithms to the case when the surface has boundaries, or when
the function $f$ allows degenerate critical points such as monkey saddles.

Our experiments show that both of our algorithms run much faster in practice
than the algorithm of Li et al. while the Reeb-graph based algorithm performs the best. Furthermore, our tests  show that both of the algorithms presented here are robust to noise.


\section{Morse Theory and Handle Decomposition for Surfaces}
Let $M$ be a compact smooth surface, and let $I=[a,b]\subseteq \mathbb{R} $, where $a<b$, be a closed interval. Let $f:M \longrightarrow I$ be a smooth function defined on $M$. A point $x \in M$ is called a \textit{critical point} of $f$ if the differential $df_x$ is zero. A value $c$ in $\mathbb{R}$ is called a \textit{critical value} of $f$ is $f^{-1}(c)$ contains a critical point of $f$. A point in $M$ is called a \textit{regular point} if it is not a critical point. Similarly, a value $c \in \mathbb{R}$ is called regular if it is not critical. The inverse function theorem implies that for every regular value $c$ in $\mathbb{R}$ the level set $f^{-1}(c)$ is a $1$-manifold, i.e., $f^{-1}(c)$ is a disjoint union of simple closed curves. A critical point is called \textit{non-degenerate} if the matrix of the second partial derivatives of $f$, called the \textit{Hessian matrix}, is non-singular. 

The definition of a Morse function is motivated mainly by the following Lemma.
\begin{lemma}
(Morse Lemma) Let $M$ be a smooth surface, $f:M \longrightarrow I$ be a smooth function and $p$ be a non-degenerate critical point of $f$. We can choose a chart $(\phi,U)$ around $p$ such that $f \circ \phi^{-1}$ takes exactly one of the following three forms:
\\
\begin{enumerate}

\item $f \circ \phi^{-1}(X,Y)= X^2+Y^2+c$.
\item $f \circ \phi^{-1}(X,Y)= -X^2-Y^2+c$.
\item $f \circ \phi^{-1}(X,Y)= X^2-Y^2+c$.
\end{enumerate}
\end{lemma}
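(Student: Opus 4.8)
The plan is to reduce everything to a local computation in $\mathbb{R}^2$ and then to diagonalize the second-order part of $f$ by a smooth change of coordinates, the resulting chart being the \emph{Morse chart}. First I would pick any chart $(\psi,V)$ around $p$ with $\psi(p)=0$ and replace $f$ by $g=f\circ\psi^{-1}-c$, defined on a convex neighborhood of $0$ in $\mathbb{R}^2$, so that $g(0)=0$ and $dg_0=0$. The goal then becomes to find a further diffeomorphism carrying $g$ into one of the stated normal forms; the additive constant $c$ is reinstated at the very end, and $\phi$ is obtained by composing $\psi$ with this diffeomorphism.

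The key algebraic input is a Taylor expansion with integral remainder (Hadamard's lemma). Writing $g(x)=\int_0^1\frac{d}{dt}g(tx)\,dt$ and iterating the device once more on the resulting first derivatives, one obtains smooth functions $h_{ij}$ on a neighborhood of $0$ with $g(x)=\sum_{i,j}x_ix_j\,h_{ij}(x)$; after symmetrizing we may take $h_{ij}=h_{ji}$ with $h_{ij}(0)=\tfrac12\,\partial_i\partial_j g(0)$. Non-degeneracy of $p$ says precisely that the symmetric matrix $\bigl(h_{ij}(0)\bigr)$ is invertible.

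Next I would carry out the "completing the square" induction, which in dimension $2$ is only two steps. After a linear change of coordinates (a permutation of the axes suffices, using invertibility of the Hessian) we may assume $h_{11}(0)\neq 0$; then $\sqrt{|h_{11}(x)|}$ is smooth on a small enough neighborhood of $0$, and the substitution $u_1=\sqrt{|h_{11}(x)|}\bigl(x_1+\sum_{j>1}x_j h_{1j}/h_{11}\bigr)$, $u_2=x_2$ has triangular differential with nonzero diagonal at $0$, hence is a diffeomorphism near $0$ by the inverse function theorem. In the new coordinates $g=\pm u_1^2+(\text{quadratic form in }u_2\text{ with smooth coefficients})$ with no $u_1u_2$ cross term; repeating the same maneuver on the $u_2$-part gives $g=\pm u_1^2\pm u_2^2$. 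Finally, Sylvester's law of inertia shows the signature $(\pm,\pm)$ does not depend on the choices made, so the three possibilities $(+,+)$, $(-,-)$, $(+,-)$ are exactly the three forms listed, and reinstating $c$ finishes the argument.

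The main obstacle to watch is the \emph{smoothness} of the inductive change of variables: one must check that $h_{11}(0)\neq 0$ can always be arranged (this is where one uses invertibility of the Hessian, not merely nonvanishing of some entry), that $|h_{11}|$ stays bounded away from $0$ on a sufficiently small neighborhood so that $\sqrt{|h_{11}|}$ is genuinely $C^\infty$ there, and that after the first substitution the leftover coefficients are still smooth functions of the new coordinates, so the second step applies verbatim. Everything else is routine bookkeeping with Taylor's theorem and the inverse function theorem.
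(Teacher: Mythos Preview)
Your argument is correct and is exactly the classical proof one finds in Milnor's \emph{Morse Theory}: Hadamard's lemma to write $g$ as a quadratic form with smooth coefficients, followed by the two-step ``completing the square'' diagonalization, with the inverse function theorem justifying each coordinate change and Sylvester's law of inertia fixing the signature. The cautions you flag (arranging $h_{11}(0)\neq 0$ via a linear change, smoothness of $\sqrt{|h_{11}|}$ on a small enough neighborhood, and smoothness of the residual coefficients in the new coordinates) are the right ones and are all easily handled.

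There is nothing to compare against, however: the paper does not prove the Morse Lemma. It is stated as background (a classical result motivating the definition of a Morse function) and left without proof, the implicit reference being any standard text such as \cite{milnor1963morse} or \cite{matsumoto2002introduction}. So your proposal is not so much an alternative to the paper's proof as a supplied proof where the paper gives none.
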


The \textit{index} of a critical point $x$ of $f$, denoted by $index_f(x)$, is defined to be the number of negative eigenvalues of its Hessian matrix. Since the Hessian of a scalar function on smooth surface is a $2\times 2$ symmetric matrix, then the index takes the values $0,1$ or $2$.
One can see that on a non-degenerate critical point of index $0$ the function $f$ takes a minimum value, on a non-degenerate critical point of index $1$, the graph of the function looks like a saddle and on a non-degenerate critical point of index $2$ the function $f$ takes a maximum value. See Figure \ref{minmaxsaddle}

\begin{figure}[h]
  \centering
   {\includegraphics[width=0.55\textwidth]{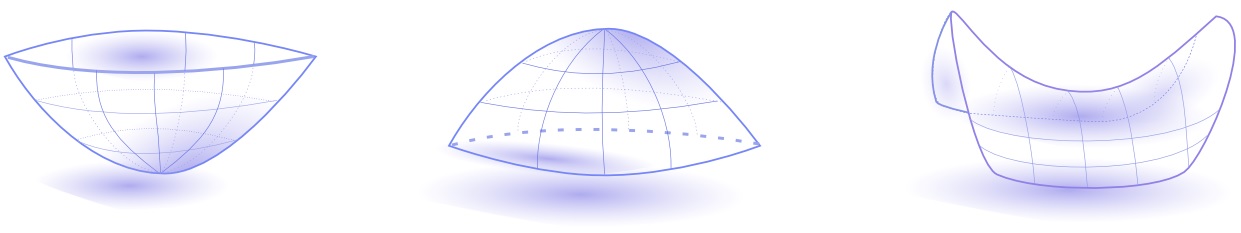}
   \caption{Minimum, Maximum, and Saddle.}
    \label{minmaxsaddle}
 }
\end{figure} 

If all the critical points of $f$ are non-degenerate and all critical points have distinct values, then $f$ is called a \textit{Morse function}. If the surface $M$ has boundary, then we also require two other conditions : (1) $f^{-1}(\partial I) = \partial M$
and (2) the critical points of $f$ lie in the interior of $M$.

\subsection{Handle Decomposition of a Surface}
Our first algorithm uses the attachment of cylinders called handles at
the critical points. We need a few definitions first.\\ 

 Let $M$ be a smooth surface and let $f:M \longrightarrow \mathbb{R}$ be a Morse function defined on $M$. Define the set 
\begin{equation*}
M_{f,t} =\{ x\in M :f(x)\leq t \}.
\end{equation*} 
Let $a,b$, $a<b$, be two reals. Define 
\begin{equation*}
M_{f,[a,b]} =\{ x\in M : a\leq f(x)\leq b \}.
\end{equation*} 
When it is clear from the context, we will drop $f$ from the notation and use simply $M_{t}$ and $M_{[a,b]}$ to refer to the previous two sets.
Morse theory studies the topological changes of $M_{t}$ as $t$ varies. The following is well-known~\cite{matsumoto2002introduction}.
\begin{theorem}
Let $f:M\longrightarrow \mathbb{R}$ be a smooth function on a smooth surface $M$. For two reals $a,b$, $a<b$, if $f$ has no critical values in the interval $[a,b]$, then the surfaces $M_{a}$ and $M_{b}$ are diffeomorphic. 
\end{theorem}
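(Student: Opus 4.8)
The plan is to produce an explicit diffeomorphism by flowing $M_b$ down onto $M_a$ along a rescaled gradient vector field of $f$, which is the classical argument from Morse theory (see~\cite{milnor1963morse,matsumoto2002introduction}). First I would observe that since $M$ is compact, the set $M_{[a,b]}=f^{-1}([a,b])$ is compact, and by hypothesis it contains no critical points of $f$; in particular $f$ is a submersion on an open neighborhood of $M_{[a,b]}$.

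Next, fix a Riemannian metric on $M$ and let $\nabla f$ be the corresponding gradient. Since $\nabla f$ is nonvanishing on $M_{[a,b]}$, the vector field $\nabla f/\langle\nabla f,\nabla f\rangle$ is well defined and smooth on a neighborhood of $M_{[a,b]}$, and it satisfies $\langle\nabla f,\ \nabla f/\langle\nabla f,\nabla f\rangle\rangle=1$, so $f$ increases at unit rate along its integral curves. Multiplying by a smooth bump function $\rho$ equal to $1$ on $M_{[a,b]}$ and supported in that neighborhood, I obtain a globally defined smooth vector field $X$ on $M$ with $df(X)=\rho\ge 0$ everywhere. Because $M$ is compact, $X$ is complete, hence it generates a one-parameter group of diffeomorphisms $\varphi_t\colon M\to M$.

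The key computation is that for any $q\in M$, as long as $\varphi_t(q)$ lies in $M_{[a,b]}$ one has $\frac{d}{dt}f(\varphi_t(q))=1$, so $f(\varphi_t(q))=f(q)+t$ there; in particular $\varphi_{b-a}$ carries the level set $f^{-1}(a)$ diffeomorphically onto $f^{-1}(b)$. I would then verify that $\varphi_{b-a}$ maps $M_a$ onto $M_b$: since $f$ is nondecreasing along the flow, a point below level $a$ either stays below $b$ after time $b-a$, or first reaches level $a$ at some time $t_0\ge 0$ and then moves at unit rate, attaining at most level $a+(b-a)=b$; applying the same reasoning to $\varphi_{-(b-a)}$ and $M_b$ gives the reverse inclusion. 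Hence $\varphi_{b-a}|_{M_a}\colon M_a\to M_b$ is the desired diffeomorphism.

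The step requiring the most care is this last verification that the time-$(b-a)$ flow sends $M_a$ exactly \emph{onto} $M_b$ rather than merely into it, together with the bookkeeping near level $a$ where integral curves enter $M_{[a,b]}$; by contrast, the construction of $X$ and the completeness of its flow are routine once $M$ is known to be compact. As an alternative route one could invoke Ehresmann's fibration theorem: $f$ restricted to $M_{[a,b]}$ is a proper submersion onto $[a,b]$, hence a locally trivial fiber bundle, necessarily trivial over an interval, so $M_{[a,b]}\cong f^{-1}(a)\times[a,b]$; gluing this collar onto $M_a$ along $f^{-1}(a)$ recovers $M_b$ without changing the diffeomorphism type.
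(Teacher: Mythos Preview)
Your argument is the standard gradient-flow proof from Milnor and Matsumoto, and it is correct; the only delicate point is exactly the one you flag, and your bookkeeping for it (together with the Ehresmann alternative) is fine. Note, however, that the paper does not actually prove this theorem: it simply states it as well known and defers to~\cite{matsumoto2002introduction}, so there is no in-paper argument to compare yours against. Your write-up is precisely the proof one finds in those references.
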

The previous theorem says that the topology of the surface $M_t$ does not change as $t$ passes through regular values. In the following we use $D^1$ to denote the interval $[0,1]$. The end points of $D^1$ are given by $\partial D^1=\{0,1\}$. Given a Morse function $f$ on M, the following theorem gives s precise description for the change that occurs in the topology of $M_t$ as $t$ passes through a critical value. 
\begin{theorem}
\label{main thm}
Let $f:M \longrightarrow \mathbb{R}$ be Morse function. Let $p$ be a critical point of index $i$ and $f(p)=t$ be its corresponding critical value. Let $\epsilon$ be chosen small enough so that $f$ has no critical values in the interval $[t-\epsilon,t+ \epsilon ]$.\\ 
\begin{enumerate}
\item
 If $index_f(p)=0$, then $M_{t+\epsilon}$ is diffeomorphic to the disjoint union of $M_{t-\epsilon}$ and a $2$-dimensional disk $D^2$.
\item If $index_f(p)=1$, then $M_{t+\epsilon}$ can be obtained from $M_{t-\epsilon}$ by attaching a $1$-handle. This means that $M_{t+\epsilon}$ can be obtained by gluing a rectangular strip $D^1\times D^1$ to the boundary of $M_{t-\epsilon}$ along $D^1\times \partial D^1 $.
\item If $index_f(p)=2$, then $M_{t+\epsilon}$ can be obtained by capping off the surface $M_{t-\epsilon}$ with a disk $D^2$. This means that $M_{t+\epsilon}$ is obtained by gluing a disk $D^2$ along its boundary $\partial D^2$ to one of the boundary components of $M_{t-\epsilon}$.  
\end{enumerate}
\end{theorem}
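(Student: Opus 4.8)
The plan is to prove the three cases of Theorem \ref{main thm} by localizing the analysis near the critical point $p$ using the Morse Lemma, and then extending the local picture to all of $M_{t+\epsilon}$ via the gradient flow argument underlying the previous theorem. First I would invoke the Morse Lemma to obtain a chart $(\phi, U)$ around $p$ in which $f\circ\phi^{-1}$ has one of the three standard quadratic forms $\pm X^2 \pm Y^2 + t$ (choosing $c = f(p) = t$), the sign pattern being dictated by $index_f(p)$. Inside this chart the sublevel sets $\{f \le t\pm\epsilon\}$ are explicit quadratic regions whose topological difference is easy to read off: for index $0$ the set $\{-X^2-Y^2 \ge -\epsilon\}$ appears out of nowhere as a small disk; for index $2$ the region $\{X^2+Y^2 \le \epsilon\}$ is a disk that gets filled in; for index $1$ the hyperbolic region $\{X^2 - Y^2 \le \epsilon\}$ differs from $\{X^2 - Y^2 \le -\epsilon\}$ exactly by a rectangular band $D^1\times D^1$ attached along two opposite edges $D^1 \times \partial D^1$.

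Next I would handle the region away from $p$. On the compact set $M_{[t-\epsilon,t+\epsilon]} \setminus (\text{small neighborhood of } p)$ the function $f$ has no critical points, so $\nabla f$ (with respect to some Riemannian metric on $M$) is nonvanishing there; normalizing it and flowing along $\nabla f / |\nabla f|^2$ gives a diffeomorphism identifying the part of $M_{t+\epsilon}$ outside the chart with the corresponding part of $M_{t-\epsilon}$, exactly as in the proof of the no-critical-value theorem. Gluing this flow-induced identification to the explicit local model across the boundary of the chart yields the claimed diffeomorphism types: in case (1), $M_{t+\epsilon} \cong M_{t-\epsilon} \sqcup D^2$; in case (3), $M_{t+\epsilon}$ is $M_{t-\epsilon}$ with a disk glued along a boundary circle; and in case (2), $M_{t+\epsilon}$ is $M_{t-\epsilon}$ with a band $D^1 \times D^1$ attached along $D^1 \times \partial D^1$, i.e.\ a $1$-handle attachment. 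Since we work in dimension $2$, I would also remark that the attaching of a $1$-handle either joins two boundary components or splits one into two depending on orientations, but the statement as phrased only asserts the handle-attachment description, so this is just a clarifying comment.

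The main obstacle is case (2): unlike the index $0$ and index $2$ cases where the local change is a disk appearing or being capped, the index $1$ case requires carefully matching the hyperbolic local model to the ambient surface so that the two attaching arcs $D^1 \times \partial D^1$ of the band actually lie on $\partial M_{t-\epsilon}$ and the rest of the band's boundary becomes part of $\partial M_{t+\epsilon}$. Making this rigorous means choosing $\epsilon$ small enough that the chart $U$ contains the whole "interesting" sublevel region $\{|f-t| \le \epsilon\} \cap U$ with the right product structure on its complement, and then checking that the gradient flow outside $U$ is compatible with the band's geometry along $\partial U$ — this is the step where one must be slightly careful about corners (the band has corners where $D^1\times\partial D^1$ meets the free boundary) and smooth them, or equivalently work up to diffeomorphism of manifolds-with-corners. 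I would cite \cite{matsumoto2002introduction} or \cite{milnor1963morse} for the standard handle-attachment lemma rather than reproving the corner-smoothing in detail, and restrict the explicit computation to verifying the local quadratic models, which is a short and routine calculation.
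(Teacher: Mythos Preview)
Your proof sketch is correct and follows the standard Morse-theoretic argument: localize via the Morse Lemma to get the quadratic model, read off the handle attachment in each index case, and use the gradient flow on the complement of a Morse chart to extend the identification globally. This is precisely the proof one finds in the references the paper cites, e.g.\ \cite{milnor1963morse} or \cite{matsumoto2002introduction}.

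However, there is nothing to compare against: the paper does \emph{not} supply its own proof of Theorem~\ref{main thm}. It is stated as a known structural result from Morse theory (in the same spirit as the preceding theorem, which is explicitly attributed to \cite{matsumoto2002introduction}) and is then used as a black box in the proof of Lemma~\ref{main lemma} and in the handle-based decomposition algorithm. So your write-up is not an alternative to the paper's argument but rather a fleshing-out of a result the paper takes for granted; your instinct to cite \cite{matsumoto2002introduction} or \cite{milnor1963morse} for the corner-smoothing and handle-attachment details is exactly what the paper itself does implicitly.
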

\subsection{Morse Theory for Triangulated Surfaces}

Morse theory was extended to triangulated surfaces ($2$-manifolds)
by Banchoff~\cite{critical1967}. Let $M$ be a triangulated surface, 
and let $f :M\longrightarrow I$ be a 
piece-wise linear (PL) continuous function on $M$. 
The link $Lk(v)$ of a vertex $v$ is defined as the set of all vertices $w$ that $v$ shares an edge $[v,w]$ with.
The upper link of $v$ is defined as 
\begin{equation*}
Lk^+(v)=\{ u \in Lk(v):   f(u)>f(v)\},
\end{equation*}
and the lower link is defined by
\begin{equation*}
Lk^{-}(v)=\{u \in Lk(v):  f(u)<f(v)\},
\end{equation*}
and mixed link 
\begin{equation*}
Lk^{\pm}(v)=\{ (u_1,u_2) : f(u_1)< f(v) < f(u_2)\}.
\end{equation*}
\begin{figure}[H]
\centering
  \includegraphics[width=0.5\textwidth]{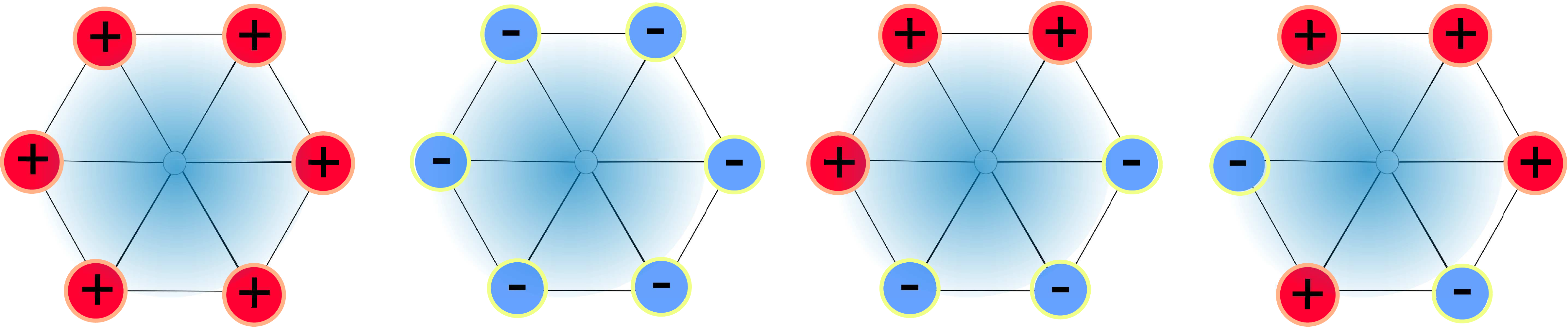}
  \caption{(a) minimum, (b) maximum, (c) regular vertex, (d) saddle .}
  \label{classification}
\end{figure}
Using the link definitions we can classify the interior vertices of $M$. The
boundary vertices are treated separately as we will require $f$ to have
identical values on them. 
An interior vertex $v$ is \textit{regular} if $|Lk^{\pm}(v)|=2$, is a
\textit{maximum} with index $2$ if $|Lk^+(v)|=0$, is
a \textit{minimum} with index $0$ if $|Lk^-(v)|=0$, 
and is a \textit{saddle} with index $1$ and multiplicity $m\geq 1$ 
if $|Lk^{\pm(v)}|=2+2m$. 
See Figure \ref{classification}. A vertex is 
a simple critical point if it is either minimum, or maximum, or a saddle
with multiplicity $1$. A PL function on a closed triangulated surface $M$ is  
\textit{PL Morse } if all its vertices are either PL 
regular or simple PL critical and have distinct function values. If the mesh $M$ has a non-empty boundary then we also require (1) $f^{-1}(\partial I) = \partial M$, 
and (2) the critical vertices of $f$ lie in the interior of $M$.

\subsection{Reeb Graph}
Let $M$ be a surface possibly with boundary $\partial M$, and let $f:M \longrightarrow [0,1]$ be continuous. Define the equivalence relation $\sim$ on $M$ by $x \sim y$ if and only if $f(x)=f(y)=c\in [0,1]$ and
$x$ and $y$ belong to the same connected component of the level set $f^{-1}(c)$. The set $R(f)=X/{\sim}$ with the standard quotient topology is called \textit{Reeb graph} of $f$. When $f$ is smooth and Morse, or PL Morse,
every vertex of the $R(f)$ arises from a critical point of $f$
or a boundary component. Every maximum or minimum of $f$ 
gives rise to a degree $1$-node of $R(f)$. Since $f$ is Morse,
every boundary component also gives rise to a degree $1$-node and 
every saddle of $f$ gives rise to degree $3$-node. If $M$ is an embedded surface without boundary then $M$ can be recovered up to a homeomorphism from $R(f)$ as the boundary of an oriented $3$-dimensional regular neighborhood of the graph $R(f)$. See Figure \ref{Reebexample} for an example of a Reeb graph.
\begin{figure}[h]
  \centering
   {\includegraphics[width=0.5\textwidth]{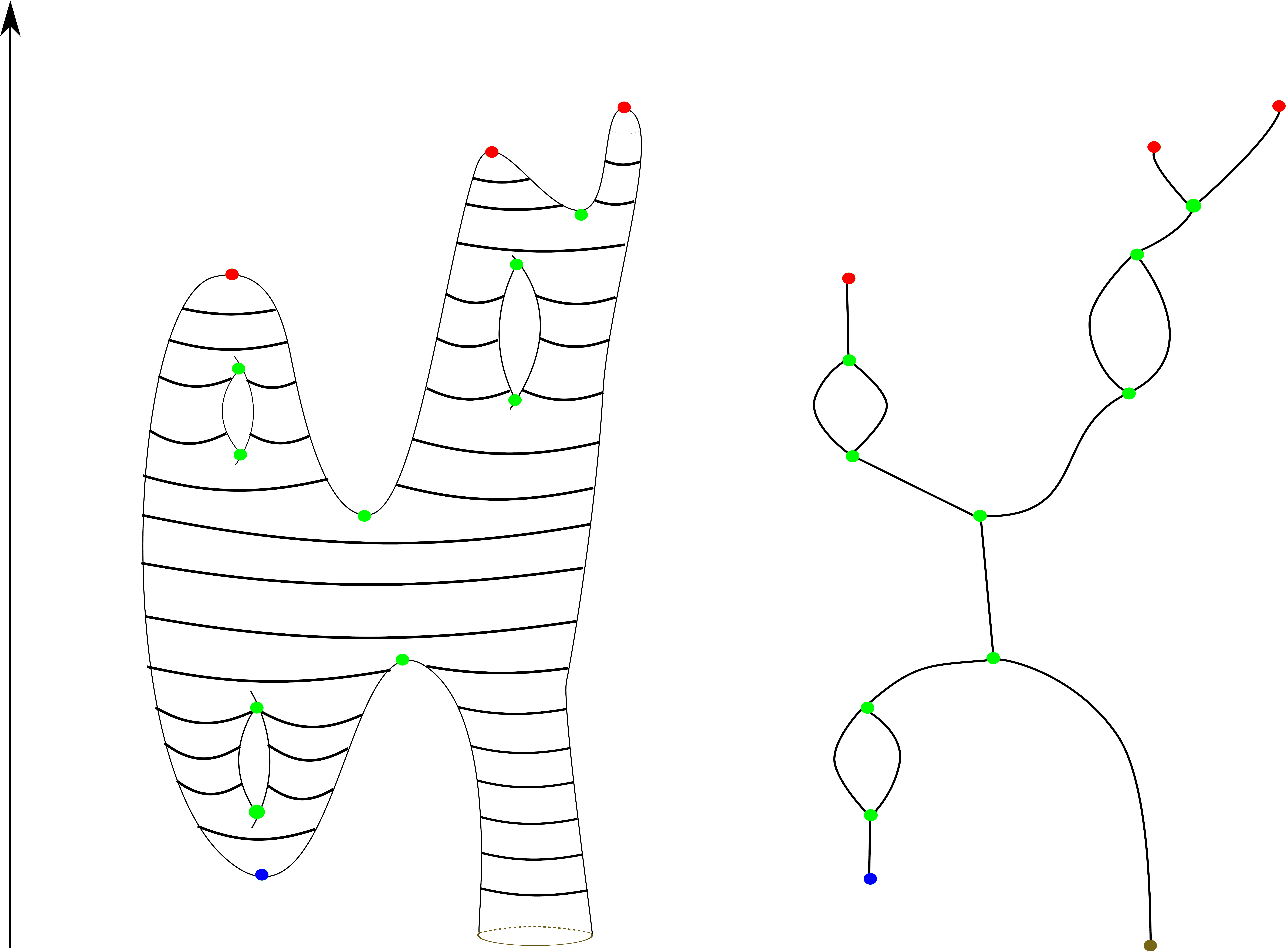}
  }
  \put(-240,165){$f$}
  \caption{An example of a Reeb graph.}
  \label{Reebexample}
\end{figure}
\begin{remark}
One reason that we require the Morse function to assume constant 
values on boundaries
is that it makes it consistent with the definition of the Reeb graph we give here. Note that each boundary component maps exactly to one point on the Reeb graph. 
\end{remark} 

The definition of Reeb graph goes back to G. Reeb~\cite{reeb1946points}. It was first introduced to computer graphics in~\cite{sato1994matrix}. Reeb graph has found applications in shape understanding~\cite{attene2003shape}, quadrangulation~\cite{hetroy2003topological}, surface understanding~\cite{biasotti2000extended}, segmentation~\cite{werghi2006functional}, parametrization~\cite{patane2004graph,zhang2005feature}, animation~\cite{kanongchaiyos2000articulated} and many other applications. Reeb graphs algorithms can be found in many papers such as~\cite{shinagawa1991constructing,cole2003loops,pascucci2007robust,doraiswamy2009efficient}. The most efficient algorithms in terms of time complexity are due to
~\cite{Salman12,HWW10}.

\section{Pants Decomposition}
Let $M$ be a compact, orientable, and connected surface. We say that $M$ is of type $(g,b)$ if $M$ is of genus $g$ and has a $b$ boundary components. A pair of pants is a surface of type $(0,3)$. See Figure \ref{pant}.

\begin{figure}[h]\centering
  \includegraphics[width=0.3\textwidth]{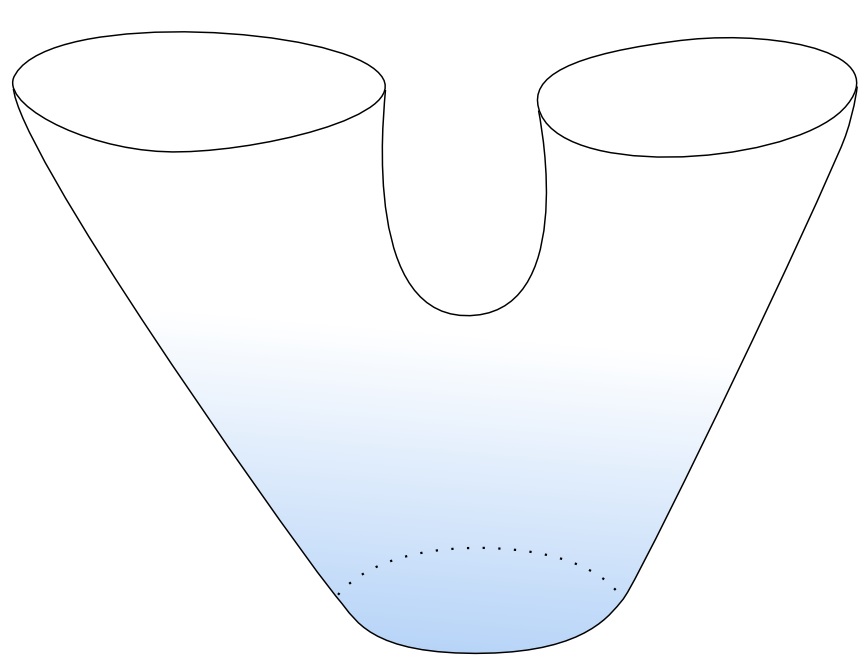}
  \caption{A pair of pants}
  \label{pant}
\end{figure}

 A pants decomposition of $M$ is a finite collection of unordered pairwise disjoint simple closed curves $\{c_1,...c_n\}$ embedded in $M$ with the property that the closure of each connected component of
$M-(c_1\cup...\cup c_n)$ is a pair of pants. Two pants decompositions of $M$ are equivalent if they are isotopic. See Figure \ref{two} for an example of two non-isotopic pants decompositions of a genus-$2$ surface.
\begin{figure}[h]\centering
  \includegraphics[width=0.8\textwidth]{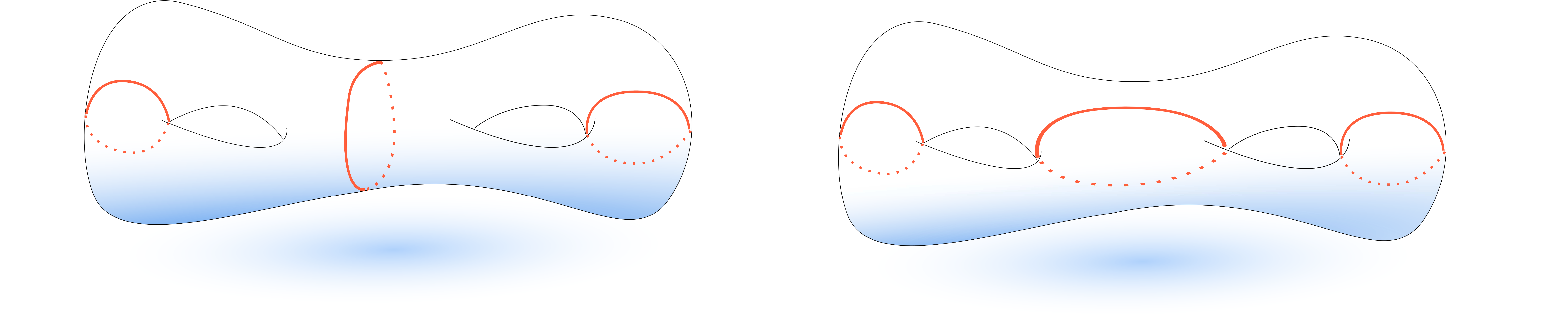}
  \caption{Two non-isotopic pants decompositions of a genus $2$ surface. }
  \label{two}
\end{figure}

Let $M$ be a compact orientable surface and connected surface of type $(g,b)$. The Euler characteristic
of M, denoted by $\chi(M)$ is defined as $\chi(M)=2-2g-b$.
Every connected, compact and orientable surface $M$ with $\chi(M)<0$, genus $g$ and $b$ boundary components admits a pants decomposition with $3g-3+b$ simple closed curves and the number of complementary components is $2g-2+b=|\chi(M)|$. 
\section{Handle-Based Pants Decomposition}
In this section we use handles given by a Morse function to design an algorithm for decomposing a surface $M$ with $\chi(M)<0$ into a collection of surfaces of type $(0,3)$. Our algorithm works for arbitrary surface $M$ with $\chi(M)<0$ 
and with or without a boundary. However, in order to guarantee the correctness of our algorithm we must choose a function with certain properties. 
Ideally, this function should be PL Morse with these properties
when $M$ is a surface mesh.
We cannot always guarantee that the function 
is PL Morse, but we can compute one that
satisfies all required properties except
the simplicity of the saddles.
We first describe the algorithm assuming a PL Morse function and later
mention how to handle the exceptions of degenerate saddles.


\subsection{Orientable Surfaces With $\chi( \cdot )<0$ and No Boundary}
\label{alg1}
In this section we give an algorithm to compute a pants decomposition of a 
triangulated surface with genus $g\geq 2$ without boundary. Let $M$ be a compact connected orientable surface with genus $g\geq 2$ without boundary and let $f$ be a PL Morse function on $M$. Suppose that $t_1,t_2,...,t_n$ are the critical values for $f$ ordered in an ascending order. Let $p_1,p_2,...,p_n$ be the corresponding critical points of $f$. Choose a real number $\epsilon > 0$ small enough so that for each $1\leq i \leq n$ there are no critical values for $f$ on the interval $[t_i-\epsilon,t_i+ \epsilon ]$ except $t_i$. Finally we assume that function $f$ has exactly one minimum and exactly one maximum. It is clear from the choice of the function $f$ that one of the points $p_1$ and $p_n$ is the global maximum and one of them is the global minimum. Since multiplying any Morse function on a surface by $-1$ changes its critical points of index $0$ to critical points of index $2$ and vise versa, we can always choose our Morse function $f$ so that $p_1$ is the global minimum and $p_n$ is the global maximum. It should be noted that such a function can be constructed in practice and we will talk about the construction of such functions later. We need the following lemma for the correctness of our algorithm. 
\begin{lemma}
\label{main lemma}
Let $M$ be a compact connected orientable surface with a Morse function $f$ chosen as specified above then $M_{t_3+\epsilon}$  is homeomorphic to a surface of type $(1,1)$ or a surface of type $(0,3)$. 
\end{lemma}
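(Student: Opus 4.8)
The plan is to follow the sublevel set $M_{f,t}$ as $t$ increases through $t_1$, $t_2$, and $t_3$, describing each topological change via Theorem~\ref{main thm}, and then to pin down the homeomorphism type of $M_{t_3+\epsilon}$ purely from its Euler characteristic together with the standing hypotheses that it is orientable, connected, and has nonempty boundary, appealing to the classification of compact surfaces. This way no case analysis of exactly how a $1$-handle is glued is needed.

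First I would determine the indices of $p_1$, $p_2$, $p_3$. By hypothesis $f$ has exactly one minimum, namely the global minimum $p_1$, and exactly one maximum $p_n$. Since the alternating sum of the numbers of critical points of each index equals $\chi(M)=2-2g$, the number of index-$1$ critical points is $2g$, so $n=2g+2\geq 6$ when $g\geq 2$; in particular $p_2$ and $p_3$ are distinct from both $p_1$ and $p_n$. As $f$ is PL Morse and the only minimum and maximum are $p_1$ and $p_n$, it follows that $p_2$ and $p_3$ are simple saddles of index $1$.

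Next I would compute the relevant sublevel sets. Because $t_1$ is the smallest critical value and $p_1$ is the global minimum, $M_{t_1-\epsilon}=\emptyset$, so Theorem~\ref{main thm}(1) gives that $M_{t_1+\epsilon}$ is a disk $D^2$, a surface of type $(0,1)$ with $\chi=1$. By Theorem~\ref{main thm}(2), $M_{t_2+\epsilon}$ is obtained from this disk by attaching a $1$-handle, so it is connected, orientable as a subsurface of the orientable $M$, has nonempty boundary since the free sides $\partial D^1\times D^1$ of the attached strip lie in its boundary, and satisfies $\chi=0$ because a $1$-handle attachment lowers the Euler characteristic by one. The only compact, connected, orientable surface with nonempty boundary and Euler characteristic $0$ is the annulus, of type $(0,2)$. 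Since $f$ has no critical value strictly between $t_2$ and $t_3$, the set $M_{t_3-\epsilon}$ is homeomorphic to $M_{t_2+\epsilon}$ and hence is an annulus, and $M_{t_3+\epsilon}$ is obtained from it by attaching the $1$-handle at $p_3$. As above, $M_{t_3+\epsilon}$ is compact, connected, orientable, and has nonempty boundary, and $\chi(M_{t_3+\epsilon})=-1$; writing $\chi=2-2g'-b'$ with $b'\geq 1$ forces $(g',b')=(0,3)$ or $(g',b')=(1,1)$, which is the assertion.

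The only delicate points are verifying that $p_2$ and $p_3$ really are index-$1$ saddles --- this is precisely where the hypothesis $g\geq 2$ is used, since it guarantees $n\geq 4$ and hence $p_3\neq p_n$ --- and checking that attaching a $1$-handle never eliminates the boundary; the latter is immediate because only $D^1\times\partial D^1$ is glued, leaving $\partial D^1\times D^1$ free. The rest is bookkeeping with the Euler characteristic and the classification of compact surfaces.
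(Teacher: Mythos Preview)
Your proof is correct and takes a genuinely different route from the paper. The paper argues geometrically: after identifying $M_{t_2+\epsilon}$ as an annulus, it observes that the $1$-handle at $p_3$ can be glued either with both attaching arcs on the same boundary circle (yielding type $(0,3)$) or on different boundary circles (yielding type $(1,1)$), and these are the only two cases up to homeomorphism. You instead bypass this case analysis entirely by tracking the Euler characteristic through each handle attachment and then invoking the classification of compact orientable surfaces with boundary to read off the possible types from $\chi=-1$. Your approach is cleaner and more systematic, and it makes the role of orientability explicit (needed to rule out the M\"obius band at the $\chi=0$ step and the non-orientable $\chi=-1$ surfaces at the final step); the paper's approach, on the other hand, is more visual and tells you exactly which gluing pattern produces which outcome, information that is relevant later when the algorithm must distinguish the two cases in practice.

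One small remark: your claim that ``this is precisely where the hypothesis $g\geq 2$ is used'' slightly overstates things, since $g\geq 1$ already gives $n\geq 4$ and hence $p_3\neq p_n$. The argument as written goes through verbatim for $g\geq 1$; the stronger hypothesis $g\geq 2$ is needed elsewhere in the paper (to ensure the decomposition has at least one pair of pants), not for this lemma.
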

\begin{proof}
The choice of the scalar function $f$ implies immediately that for each  $ 
2 \leq i \leq n-1 $ we have $index_f( p_i)=1$. Moreover, by construction we have $index_f(p_1)=0$ and $index_f(p_n)=2$. By Theorem \ref{main thm} we conclude that $M_{t_1+\epsilon}$ is diffeomorphic to a disk and $M_{t_2+\epsilon}$ is diffeomorphic to surface of type $(0,2)$.

\begin{figure}[h]\centering
  \includegraphics[width=0.55\textwidth]{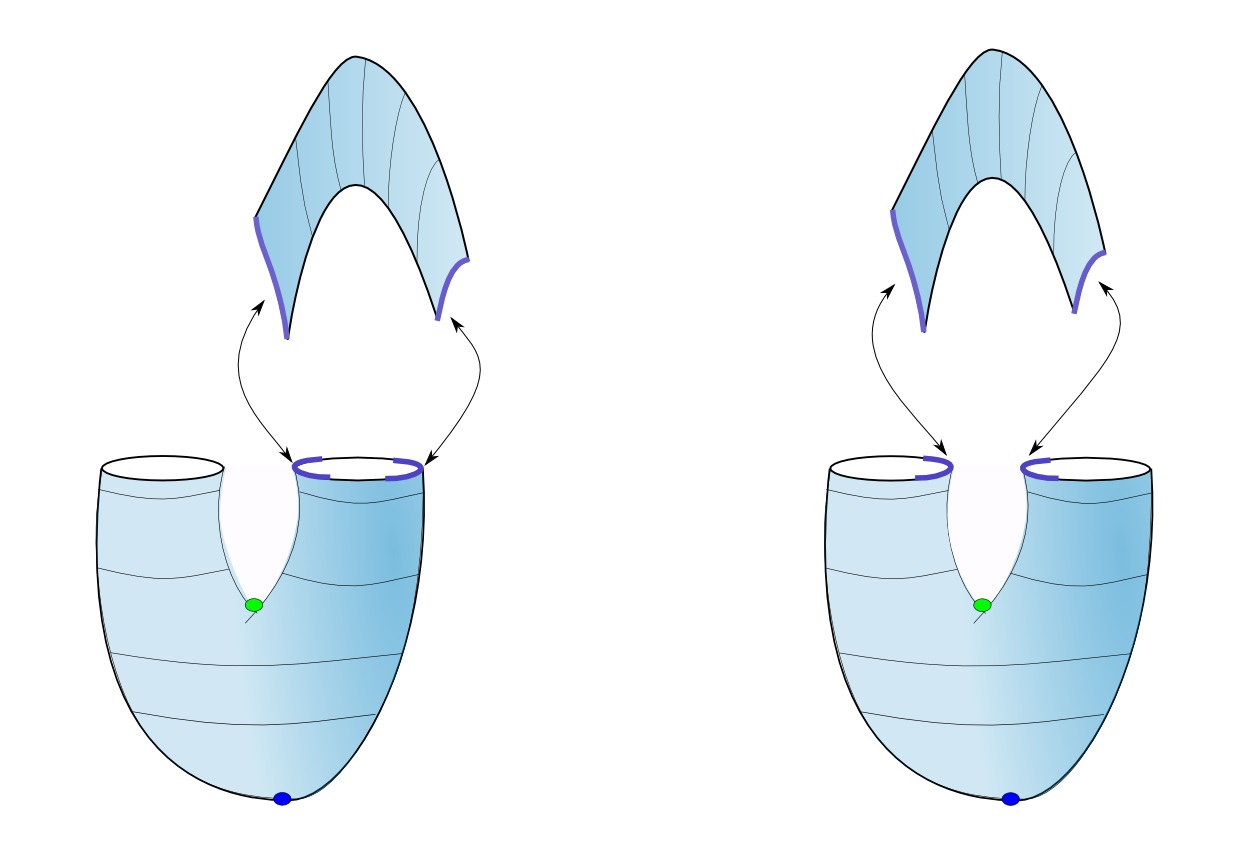}
  \small{
   \put(-247,125){$D^1\times \{0\}$}
    \put(-163,115){$D^1\times \{1\}$}
   }
  \caption{Two possible ways to glue a disk to the surface of type $(0,2)$.}
  \label{twopossibles}
\end{figure}

 Moreover, by Theorem \ref{main thm}, when $f$ passes through $t_3$ the surface $M_{t_3+\epsilon}$ is obtained from $M_{t_2+\epsilon}$ by gluing a rectangular strip $D^1\times D^1$ to the boundary of $M_{t_2+\epsilon}$ along $D^1\times \partial D^1$. Up to a homeomophism, there are two possible ways of gluing the rectangular strip $D^1\times D^1$ to the boundary of $M_{t_2+\epsilon}$ along $D^1\times \partial D^1$. See Figure \ref{twopossibles}. We either glue this rectangular strip to the same boundary component of $M_{t_2+\epsilon}$ to obtain a surface of type $(0,3)$ or we glue each side of the strip on one of the boundary components to obtain a surface of type $(1,1)$. See Figure \ref{initial}. 

\begin{figure}[h]\centering
  \includegraphics[width=0.38\textwidth]{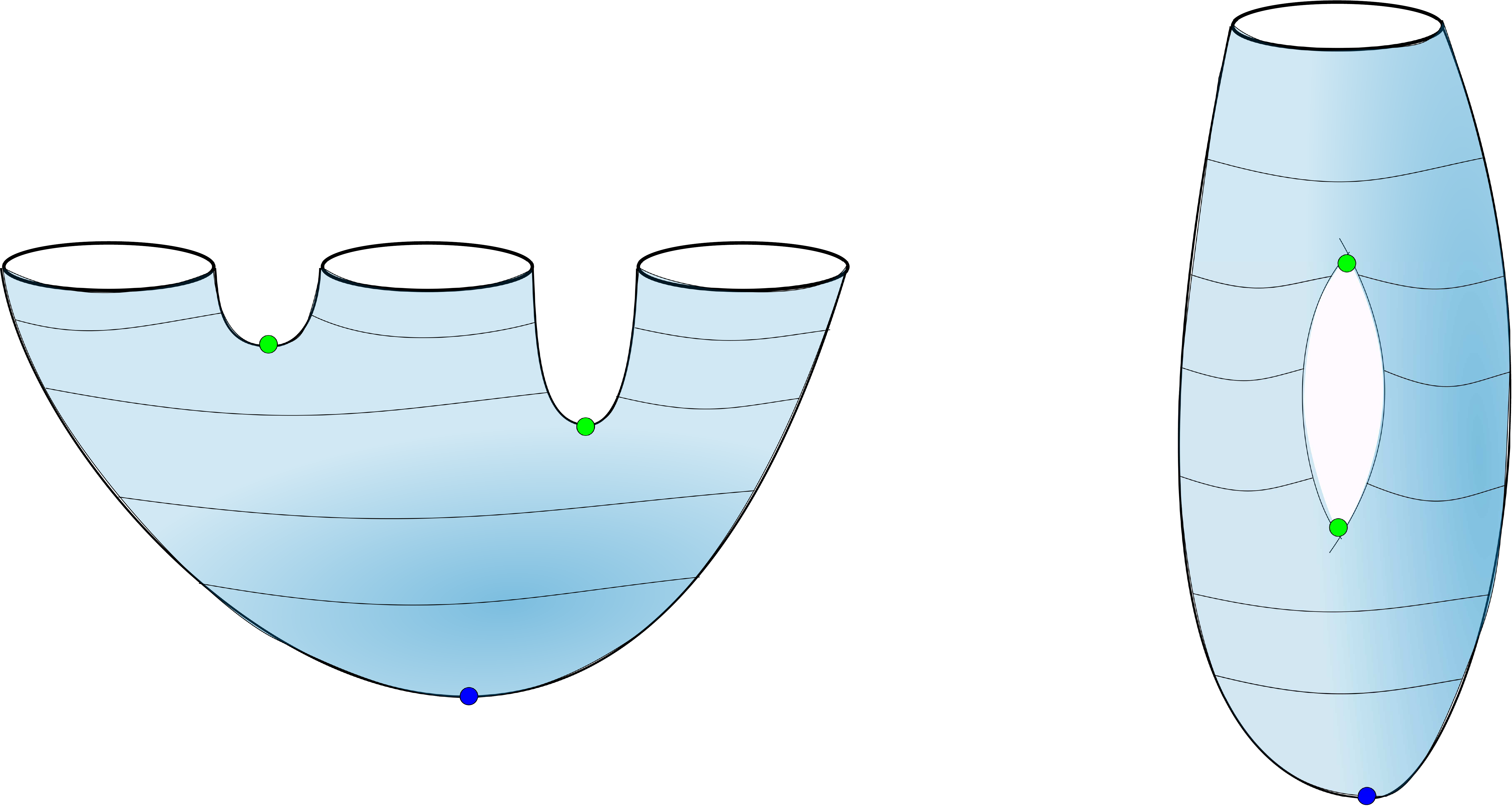}
  \caption{$M_{f,t_{3}+\epsilon}$  is homeomorphic to a type $(0,3)$ or to a type $(1,1)$}.
  \label{initial}
\end{figure}
\end{proof}

\begin{remark}
\label{main remark}
Using similar argument one can prove that the surface $M_{t_{n-2}-\epsilon}$ is either of type $(0,3)$ or a surface of type $(1,1)$.
\end{remark}
Lemma \ref{main lemma} holds in the case when the function $f$ is PL Morse on a triangulated surface. However, the case when the PL scalar function has saddle points with multiplicity larger than or equal to $2$ needs special treatment and Lemma \ref{main lemma} is no longer valid. We deal with such cases in section \ref{degenerate}.\\  

Lemma \ref{main lemma} and remark \ref{main remark} will be used to obtain the first and the last pants in our pants decomposition. The algorithm is as follows:

\begin{enumerate}
\item Compute the critical points of $f$ and put them in an ascending order. Let $p_1,p_2,...,p_n$ be the sequence of ordered critical points of $f$ and let $t_1,t_2,...,t_n$ be their corresponding critical values. Note that $n=2g+2$ by our choice of the scalar function $f$.

\item For each $3\leq i < n-3$ let $c_i=\frac{t_i+t_{i+1}}{2}$. After reindexing, we define the set $C=\{c_i| 1 \leq i \leq 2g-3\}$. In other words, the set $C$ is a set of ordered regular values for $f$ such that there is exactly one critical value for the function $f$ in the intervals $[c_i,c_{i+1}]$ for $1 \leq i \leq 2g-3$. See Figure \ref{algorithm} for an example.

\begin{figure}[H]\centering
  \includegraphics[width=0.41\textwidth]{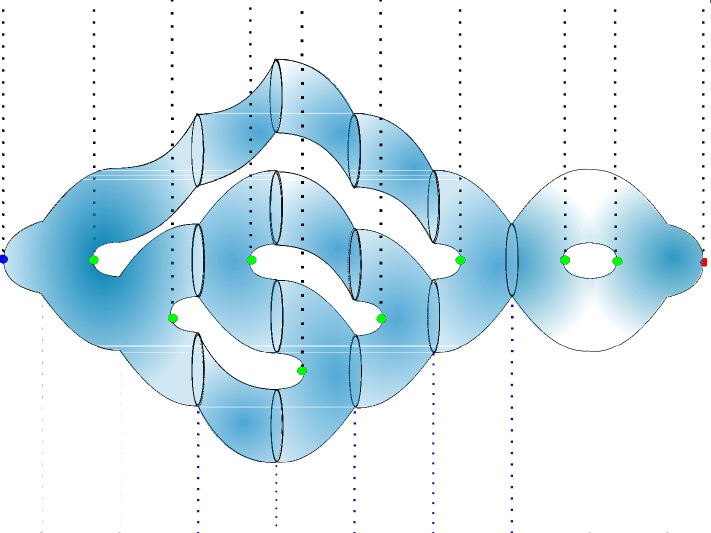}
  \put(-150,180){Increasing values of $f$}
   \put(-195,155){$t_1$}
    \put(-170,155){$t_2$}
     \put(-150,155){$t_3$}
     \put(-146,-7){$c_1$}
     \put(-132,155){$t_4$}
     \put(-125,-7){$c_2$}
     \put(-115,155){$t_5$}
     \put(-102,-7){$c_3$}
      \put(-93,155){$t_6$}
      \put(-83,-7){$c_4$}
       \put(-70,155){$t_7$}
       \put(-60,-7){$c_5$}
       \put(-46,155){$t_8$}
        \put(-31,155){$t_{9}$}
        \put(-10,155){$t_{10}$}
  \caption{Cutting a surface of genus $4$ along the values $c_i$.}
  \label{algorithm}
\end{figure}

\item Cut the surface $M$ along the level sets $f^{-1}(c)$ for all $c\in C$. Note again that the values $c \in C$ are all regular values.

\item By Lemma \ref{main lemma} the surface $M_{c_1}$ is either of type $(1,1)$ or of type $(0,3)$. If $M_{c_1}$ is of type $(1,1)$, then we trace a loop from the saddle point $p_2$ to the minimum point $p_1$ and  cut the surface along that loop. We explain later a method of tracing this loop.
See Figure \ref{torus part}. Otherwise, if the surface is of type $(0,3)$ we do not need to do anything. In either cases, we denote the resulting surface of type $(0,3)$ by $M_{initial}$.
\begin{figure}[h]\centering
  \includegraphics[width=0.1\textwidth]{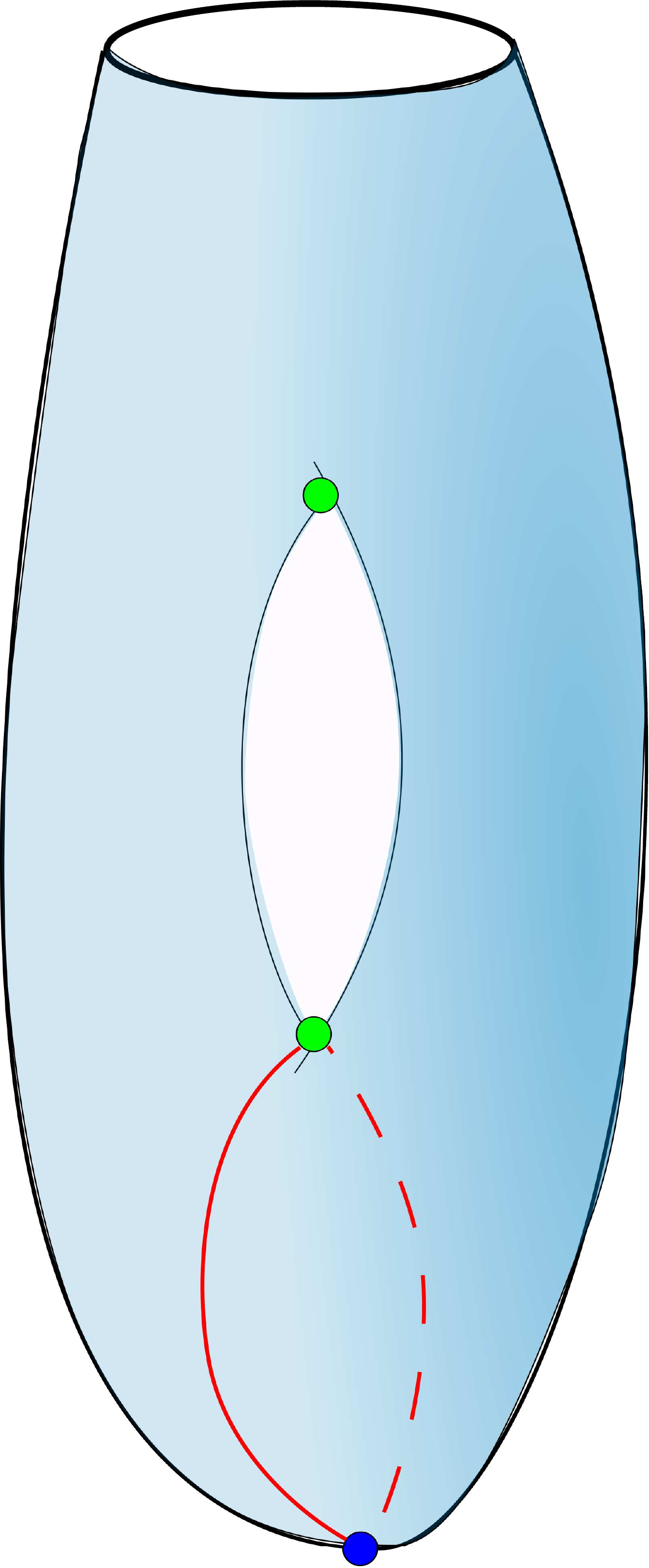}
  \caption{Tracing a loop from the first saddle point $p_2$ to the minimum $p_1$.}
  \label{torus part}
\end{figure}
\item Consider the manifold with boundary $M_{[c_1,c_{2}]}$. This surface is a finite disjoint union of one surface of type $(0,3)$ and multiple surfaces of type $(0,2)$. Attach every surface of type $(0,2)$ to $M_{initial}$. Note that this gluing does not change the homeomorphism type of $M_{initial}$. See Figure \ref{example} (b).

\item  For each $1< i < 2g-3$ consider the manifold with boundary $M_{[c_i,c_{i+1}]}$. This surface is again a finite disjoint union of one surface of type $(0,3)$ and multiple surfaces of type $(0,2)$. Attach every surface of type $(0,2)$ to $M_{[c_{i-1},c_{i}]}$. See Figure \ref{example} (c).
\begin{figure}[h]\centering
  \includegraphics[width=0.6\textwidth]{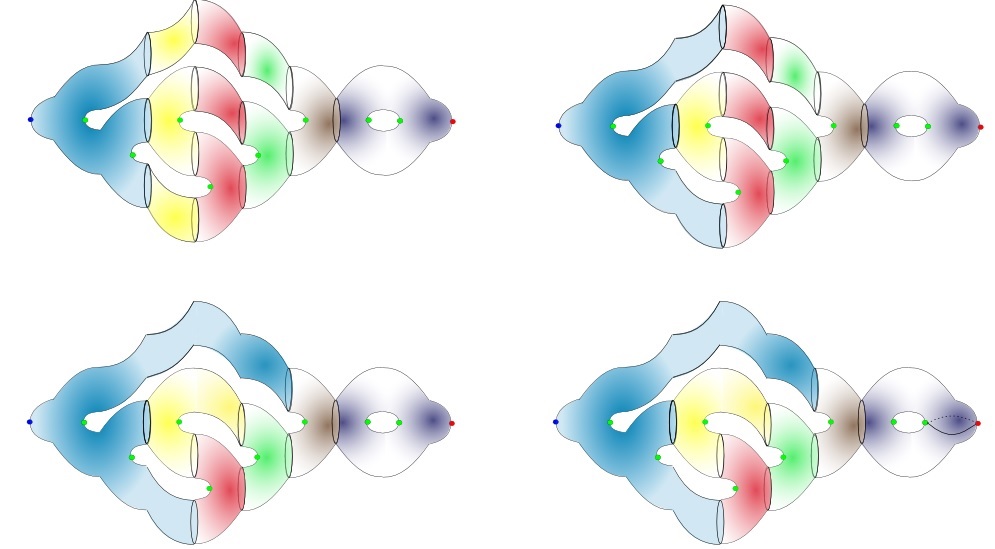}
   \put(-268,145){$(a)$}
   \put(-268,68){$(c)$}
   \put(-110,145){$(b)$}
     \put(-110,60){$(d)$}
 \caption{Illustration of the handle-based pants decomposition algorithm : (a) Cutting the surface along $c_1$, $c_2$, $c_3$, $c_4$ and $c_5$. Note that $M_{initial}$ is a pants in this example. (b) Attaching cylinders that appear on the second level to $M_{initial}$ and keeping the pant component. (c) Attaching cylinders at each level to the previous level and keeping the pant component. (d) The surface $M_{-f,c_{2g-3}}=M_{-f,c_4}$ is a surface of type $(1,1)$ so we trace a loop from the saddle to the minimum of $-f$ and cut the surface along this loop.} 
  \label{example}
\end{figure}
\item
 The remaining part $M_{-f,c_{2g-3}}$ is either of type $(1,1)$ or of type $(0,3)$. If the surface $M_{-f,c_{2g-3}}$ is of type $(1,1)$ then trace a loop from the saddle point $p_{n-1}$ to the point $p_n$ and cut the surface $M_{-f,c_{2g-3}}$ along this loop to obtain a pant. Otherwise, if the surface is of type $(0,3)$ we do not need to do anything.
 \end{enumerate}
Note that the algorithm constructs a collection of pants inductively. We start by having the first pair of pants in step $4$ and then we go to the next level which is a finite disjoint union of a single pant and some topological cylinders. We attach the cylinders to the previous pant and then we go to the next level and repeat the same process.

We should clarify here what we mean by tracing a loop from the saddle to the minimum point we mentioned in step $(4)$. Let $p_2$ the simple saddle point and $p_1$ be the unique minimum point for the PL scalar function $f$ on $M$ specified in our algorithm above. The point $p_2$ is a simple saddle, hence the set $Lk^-(p_2)$ can be decomposed into two disjoint connected components $A$ and $B$. Pick the vertex $v_A$ in $A$ such that $f(v_A)\leq f(v)$ for all $v \in A$. The vertex $v_B$ is chosen similarly. A \textit{descending path} from a regular vertex $v_0$, denoted by $dpath(v_0)$, is defined to be a finite sequence of vertices $\{v_0,...,v_k\}$ on $M$ such that $<v_i,v_{i+1}>$ is an edge on $M$ for $0\leq i \leq k-1$, $f(v_i)<f(v_{i-1})$, and $v_k$ is a minimum. A loop connecting $p_2$ and $p_1$ can be computed as the concatenation of $dpath(v_A)$, $<p_2,v_A>$, $<p_2,v_B>$, and $dpath(V_B)$. The loop in step $(7)$ is computed analogously by considering an ascending path.


\subsection{Orientable Surfaces With $\chi(\cdot)<0$ and Non-Empty Boundary}
\label{with boundary case}
In this section we present an algorithm to decompose a surface $M$ with $\chi(M)<0$ with boundary components. The algorithm follows almost similarly as before. The main difference is that we need to take care of the choice of the Morse function. We have two cases, $(1)$ The surface $M$ has exactly one boundary component $(2)$ The surface $M$ has more than one boundary component.\\

Let $M$ be a compact connected orientable surface with $\chi(M)<0$ and one boundary component $\Sigma$. We pick a point $p_{max}$ on the surface and construct a Morse function $f:M\longrightarrow[0,1]$ that satisfies the following conditions : 
\begin{enumerate}
\item 
$f^{-1}(\Sigma)=0$ and $f(x)>0$ for all $x$ in $M \backslash \Sigma$.
\item 
The point $f^{-1}(1)=x_{max}$ is a global maximum.
\item 
The function $f$ does not have any critical point of index $0$ or $2$ except for $p_{max}$. 
\end{enumerate}
The pants decomposition algorithm for a surface $M$ with  $\chi(M)<0$ and one boundary component $\Sigma$ algorithm goes now as follows : 
\begin{enumerate}
\item Compute the critical points of $f$ and put them in an ascending order. Let $p_1,p_2,...,p_n$ be the ordered critical points of $f$ and let $t_1,t_2,...,t_n$ be the corresponding critical values. By our choice of the Morse function, we have $index(p_i)=1$ for all $1\leq i \leq n-1$ and $index(p_{max})=2$.

\item For each $1\leq i < n-3$ let $c_i=\frac{t_i+t_{i+1}}{2}$. Define the set $C=\{c_i| 1 \leq i \leq 2g-3\}$. In other words, the set $C$ is a set of ordered regular values for $f$ such that there is exactly one critical value in the interval $[c_i,c_{i+1}]$. See Figure \ref{withboundary} for an example.

\item Cut the surface $M$ along the level sets $f^{-1}(c)$ for all $c\in C$.

\item Consider the manifold $M_{initial}:=M_{[-\epsilon,c_1]}$. By our choice of the Morse function $M_{initial}$ is a of type $(0,3)$.
\end{enumerate}

\begin{figure}[H]\centering
  \includegraphics[width=0.45\textwidth]{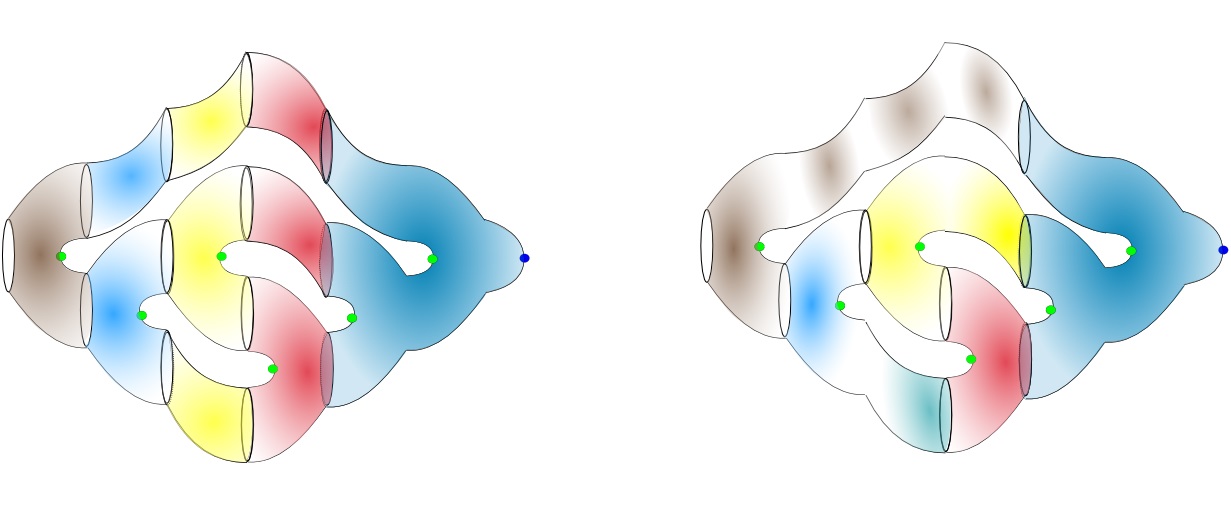}
   \put(-230,80){$(a)$}
   \put(-100,80){$(b)$}
    \caption{Illustration of the handle-based pants decomposition algorithm on a surface with a single boundary components : $(a)$ Cutting the surface along $c_1$, $c_2$. $c_3$ and $c_4$. $(b)$ Attaching cylinders at each level to the previous level and keeping the pants components. }
	\label{withboundary}
\end{figure}

The rest of the pants decomposition algorithm for a surface is similar to steps $(5)$, $(6)$ and $(7)$ of the algorithm in section \ref{alg1}.\\

Now we discuss the final case. Suppose that $M$ has boundary components $\Sigma_1,...,\Sigma_k$ where $k \geq 2$. Here we also need to construct a Morse function that serves our purpose. We need a Morse function $f:M\longrightarrow[0,1]$ that satisfies the following :
\begin{enumerate}
\item We choose one of the boundary component, say $\Sigma_1$, and we construct $f$ such that $f^{-1}(0)=\Sigma_1$ and $f(x)> 0$ for all $x\in M \backslash \Sigma_1$. 
\item   $f^{-1}(1)=\cup_{i=2}^{k} \Sigma_i$.
\item The function $f$ does not have any critical point of index $0$ or $2$.
\end{enumerate} 

As we did earlier, let $p_1,p_2,...,p_n$ be the ordered critical points of $f$ and let $t_1,t_2,...,t_n$ be the corresponding critical values of $f$. Define the values $c_i=\frac{t_i+t_{i+1}}{2}$ for all $1\leq t_i \leq n-1$. By our choice of the Morse function, the manifold  $M_{[c_i,c_{i+1}]}$ is homeomorphic to a finite disjoint union of one surface of type $(0,3)$ and multiple surfaces of type $(0,2)$. In particular $M_{[-\epsilon,c_1]}$ is a pant. The pants decomposition algorithm now is similar to the previous algorithm except that we do not trace any loop in this case since all connected components after cutting along the regular values of $C$ are pants or cylinders. See Figure \ref{withboundaries} for an example.
\begin{figure}[h]\centering
  \includegraphics[width=0.5\textwidth]{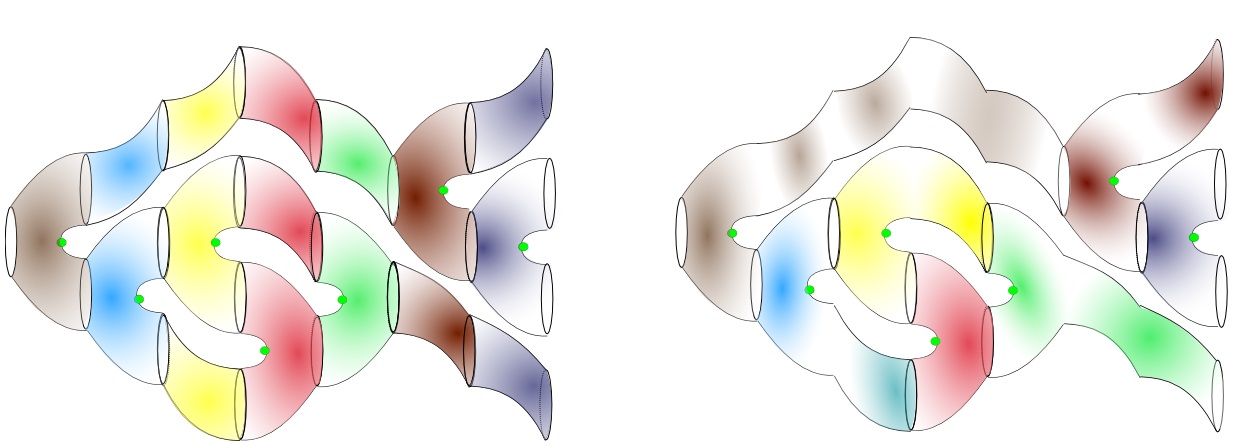}
   \put(-248,80){$(a)$}
   \put(-100,80){$(b)$}
   \caption{Illustration of the handle-based pants decomposition algorithm on a surface wit multiple boundary components : $(a)$ Cutting the surface along $c_1$, $c_2$. $c_3$, $c_4$, $c_5$ and $c_6$. $(b)$ Attaching cylinders at each level to the previous level and keeping the pants components. }
	\label{withboundaries}
\end{figure}

Figure \ref{alltog} shows multiple examples of pants decomposition of surfaces using this algorithm.

 \begin{figure}[h]\centering
  \includegraphics[width=0.6\textwidth]{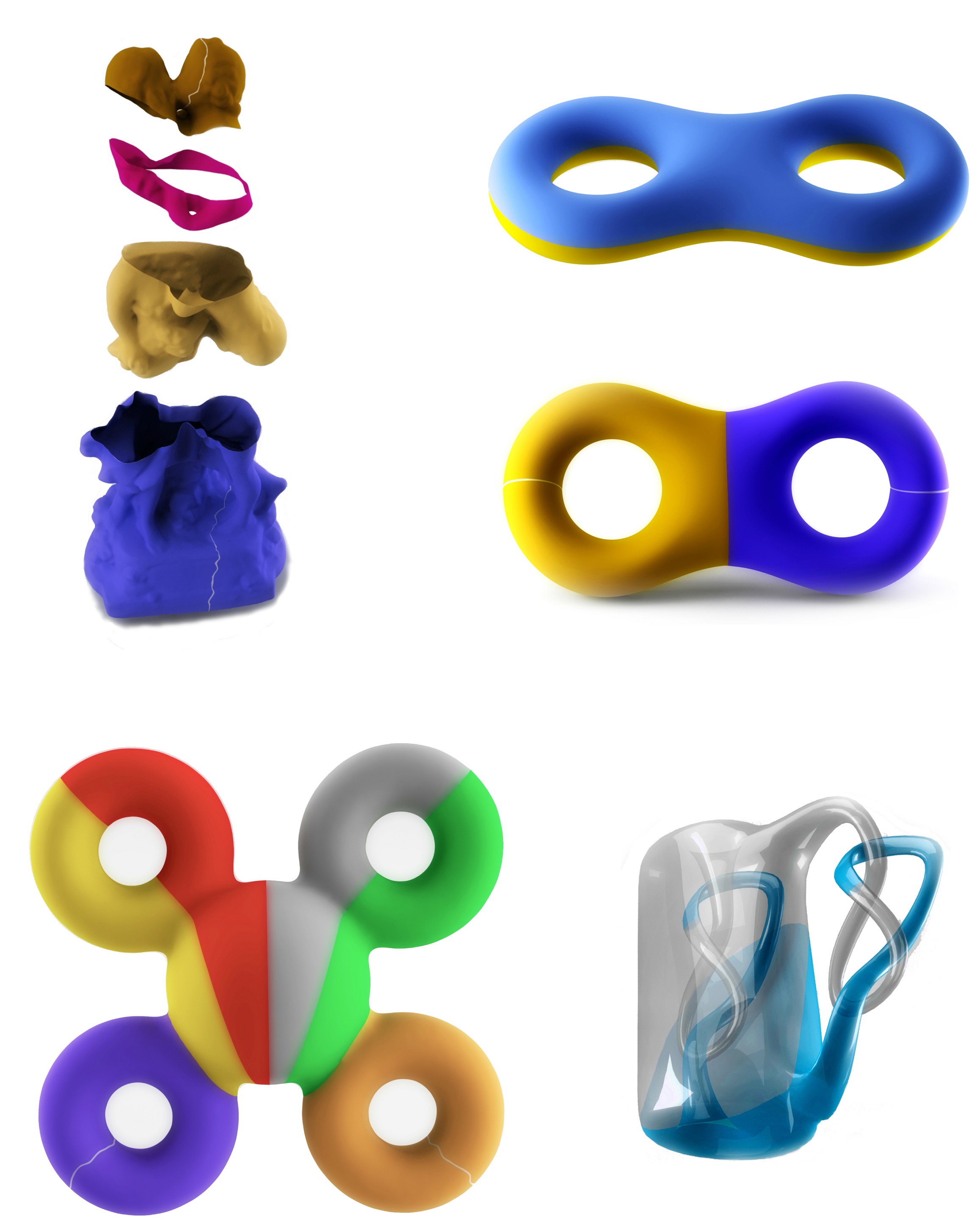}
  \caption{Pants decomposition using our Handle-based algorithm.}
  \label{alltog}
\end{figure}
\subsection{Dealing with Degenerate Cases}
\label{degenerate}

In this section we deal with the case when the index of a critical point point $p_i$  is $1$ and has multiplicity $m\geq 2$. We formulate our arguments in terms of Reeb graph for clarity.

\begin{enumerate}

\item If the point $p_1$ is a degenerate saddle point then the Reeb graph obtained from the quotient space of $M_{t_2+ \epsilon}$ appears as in Figure \ref{first case}.

\begin{figure}[h]\centering
  \includegraphics[width=0.2\textwidth]{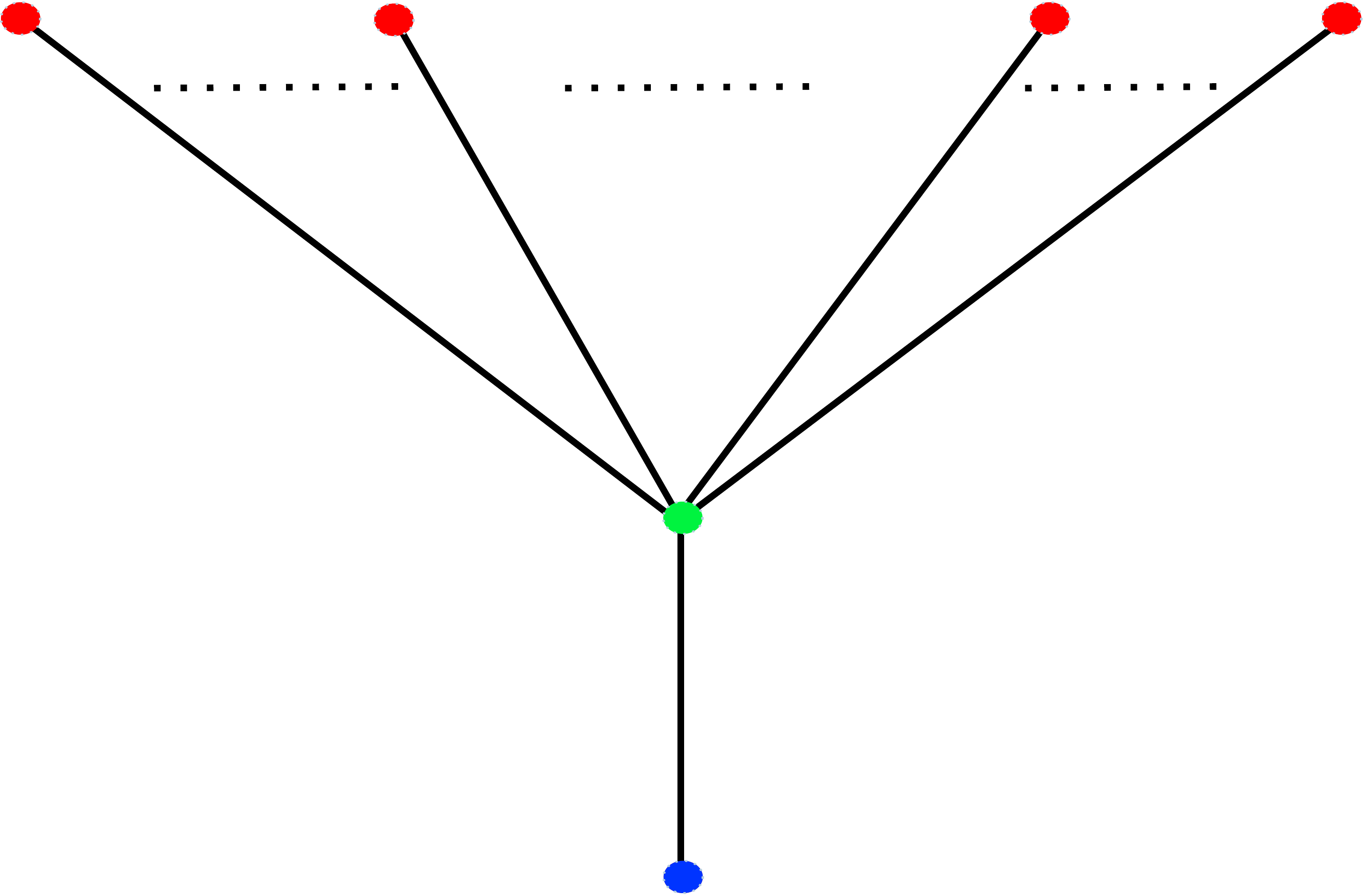}
   \caption{The Reeb graph obtained by restricting the PL scalar function on the manifold $M_{t_2+ \epsilon}$. The red point is the unique minimum of $f$ on $M$, the green point is
the degenerate saddle point, and the blue points represent the boundary circles of the surface $M_{t_2+ \epsilon}$.}
	\label{first case}
\end{figure}
In this case the surface $M_{t_2+ \epsilon}$ is of type $(0,b)$ where $b>3$ and we can decompose this surface into pants as described earlier.

\item The point $p_2$ is a simple saddle and the point $p_3$ is a degenerate one. In this case the Reeb graph obtained from the surface $M_{t_3+\epsilon}$ has two possibilities and they both appear in Figure \ref{second case}.

\begin{figure}[h]\centering
  \includegraphics[width=0.3\textwidth]{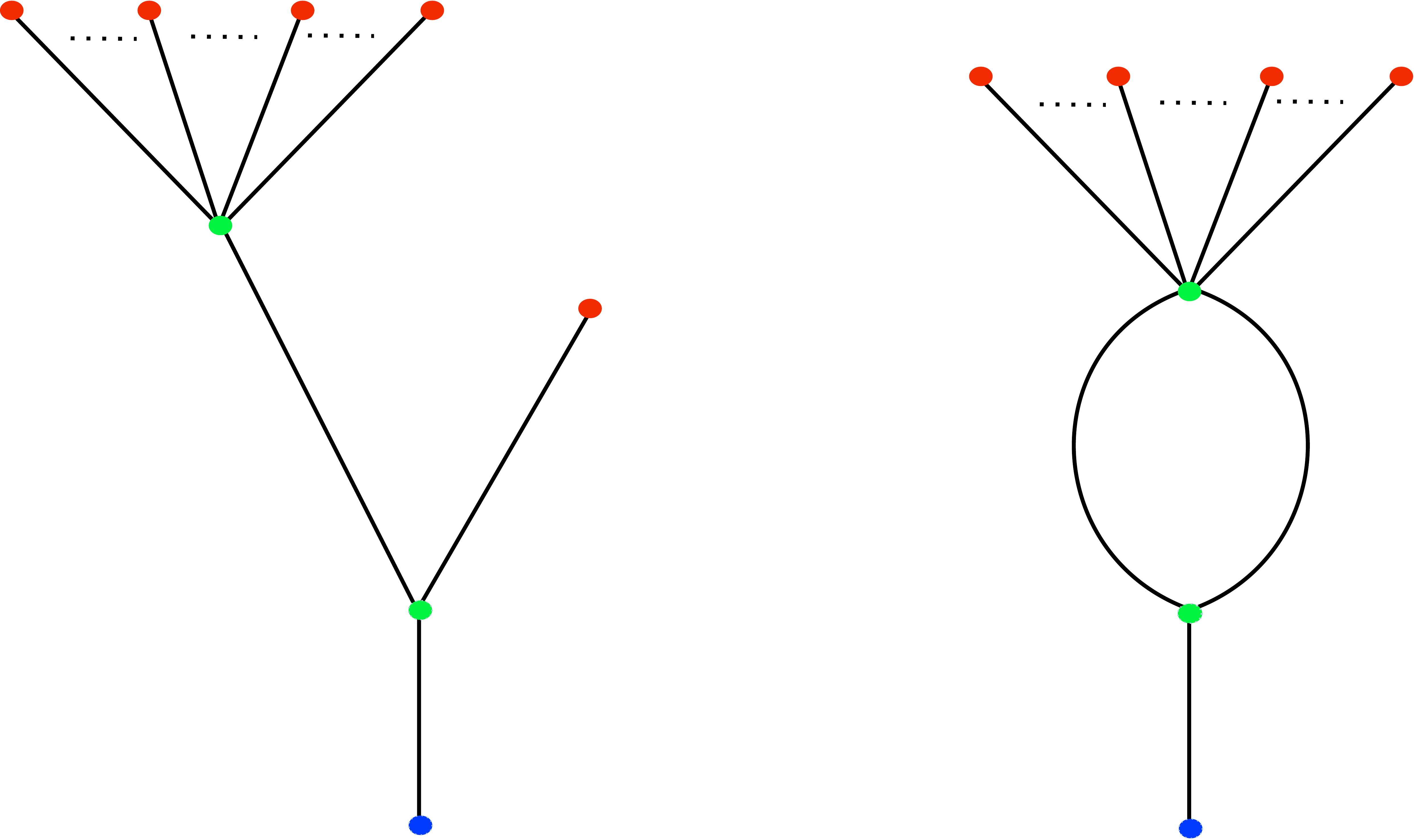}
   \caption{The two possible cases of the Reeb graph obtained by restricting the scalar function on the surface $M_{t_3+ \epsilon}$.}
	\label{second case}
\end{figure}
In the case when the Reeb graph appears as in left hand side of Figure \ref{second case} then $M_{t_3+ \epsilon}$ is homeomorphic to a surface of type $(0,b)$ where $b\geq 4 $. This surface can be decomposed into pants as we described earlier. The case when the Reeb graph appears as in the right hand side of Figure \ref{second case} then the $M_{t_3+ \epsilon}$ is surface of type $(1,b)$ where $b>2$. We can cut this surface along the simple saddle using descending path by tracing a loop from the simple saddle point the unique minimum using descending path as we described earlier into a sphere with $b+2$ boundary components. The latter can be also decomposed into pants as explained in earlier sections. Note that this case is an extension of the simple case we considered in Lemma \ref{main lemma}.

\item The case when $p_i$ is a degenerate saddle where $2<i<n-2$. In this case $f^{-1}(t_i-\epsilon,t_i-\epsilon)$, for a sufficiently $\epsilon$, is a finite disjoint union of multiple surfaces of type $(0,2)$ and one surface of of type $(0,b)$ where $b \geq 4$.  When we have this, we attach the cylinders just as before to previous pants and we apply Morse function-based algorithm again on the single remaining sphere with $b$ boundary components to decompose it into pants.
\end{enumerate}
\begin{remark}
We left out two cases, namely the cases when $p_{n-1}$ is degenerate and the case when $p_{n-1}$ is simple saddle and $p_{n-2}$ is a degenerate one. These two cases can be dealt with as cases (1) and (2) above.  
\end{remark}

\section{Reeb Graph-Based Pants Decomposition}
\label{alg2}
In this section we give pants decomposition algorithm using the Reeb graph of a Morse function. This algorithm is implicit in the work of \cite{hatcher1999pants}. We extend this algorithm to handle degenerate cases that show up in practice. The Reeb graph decomposition  algorithm has many advantages over the previous algorithm. The main advantage of this algorithm lies in the fact that it does not put any restriction on the choice of the Morse function. This allows us to choose a scalar function with better geometric properties. The second advantage is that choosing a cutting circle on a surface is much more flexible when using the structure of the Reeb graph than choosing the inverse image of a regular value of a Morse function.\\

 Let $M$ be a compact orientable (triangulated) surface, possibly with boundary, such that $\chi(M)<0$. Let $f$ be an arbitrary (PL) Morse function of $M$.  The Reeb graph-based pants decomposition algorithm of the surface $M$ and the Morse function $f$ goes as follows:

\begin{enumerate}
\item  Compute the Reeb graph $R(f)$ of $(M,f)$.
\item There are two types of $1$-valence nodes on the graph $R(f)$, the $1$-valence nodes that are a result of collapsing the boundary components of $M$ and the $1$-valence nodes that are coming from critical points of $f$ of index $0$ and $2$. We consider the graph $\overline{R(f)}$ obtained by taking the deformation retract of the graph $R(f)$ that leaves the edges coming from the boundary component without retraction. This step can be done by iteratively deleting the edges one of whose nodes has valence $1$ until there are no more such edges except the ones which have $1$-valence nodes originating from boundary components. See Figure \ref{reeb graph algorithm} step $(3)$ for an example of such retraction.
\item We remove all the nodes on the graph $\overline{R(f)}$ of valency $2$ and we combine the two edges that meet at such a node into one edge. We also denote the graph obtained from this step by $\overline{R(f)}$. Note that this graph is trivalent by construction. 
\item We select an interior point on every edge of the graph $\overline{R(f)}$ provided that neither one of the two nodes defining that edge has valency $1$. Note that the selection of the these point on the graph $\overline{R(f)}$ corresponds to partitioning the graph $\overline{R(f)}$ into a collection of small graphs each one of them is a vertex connected by small three arcs. See step $(5)$ in Figure \ref{reeb graph algorithm}. Each one of these small graphs corresponds to a pair of pants on the surface $M$.  
\item Each choice of an interior point induces a choice of a simple closed curve on the original surface $M$. The collection of all curves obtained in this way defines a pants decomposition of the surface $M$.
\end{enumerate} 

\begin{figure}[h]\centering
  \includegraphics[width=0.55\textwidth]{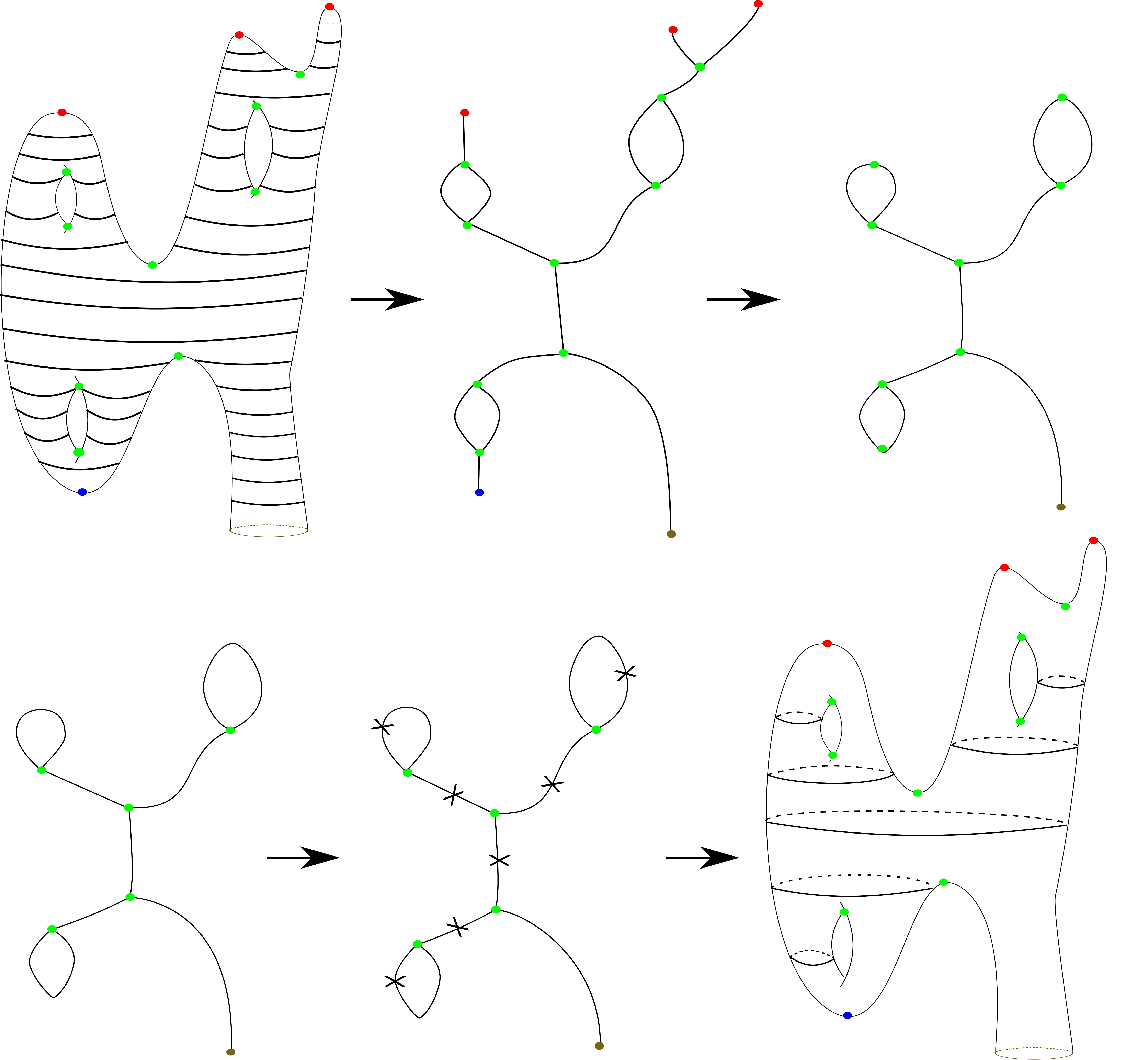}
   \put(-260,240){$(1)$}
   \put(-155,240){$(2)$}
   \put(-65,238){$(3)$}
   \put(-260,99){$(4)$}
   \put(-180,100){$(5)$}
   \put(-95,102){$(6)$}
   \caption{The steps of the Reeb graph-based pants decomposition algorithm}
	\label{reeb graph algorithm}
\end{figure}

Figure \ref{reeb graph algorithm2} shows an application of this algorithm on some surfaces.

\begin{figure}[h]\centering
  \includegraphics[width=0.6\textwidth]{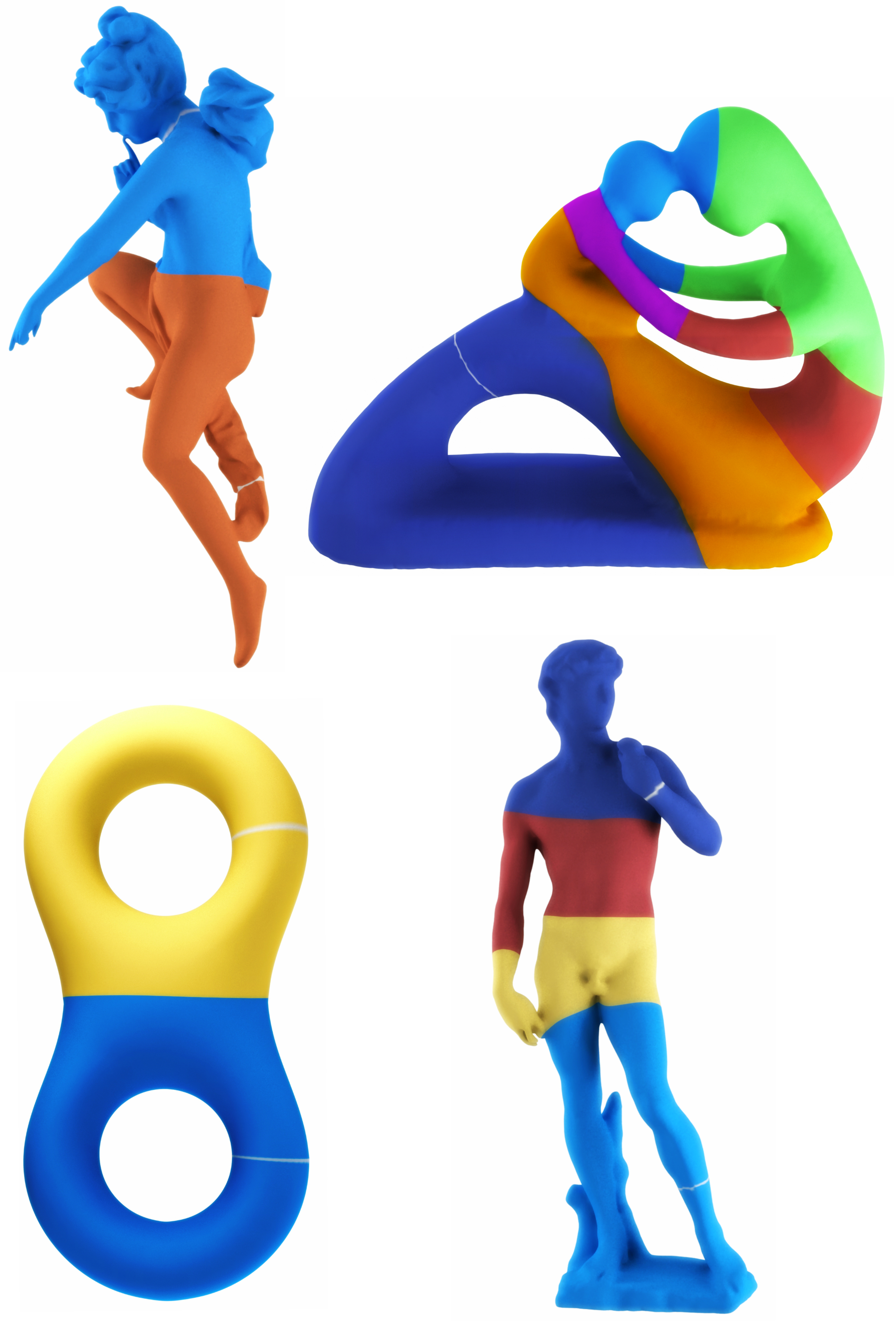}
   \caption{Pants decompositing using our Reeb graph-based algorithm}
	\label{reeb graph algorithm2}
\end{figure}

\subsection{Dealing with degenerate cases}
In the case when some critical points of index $1$ have multiplicity $m\geq 2$, we proceed as described in steps $1$ through $5$. However, the final result of the decomposition will no longer be a collection of pants but rather it will contain some surfaces of type $(0,b)$ where $b \geq 4$. For each single surface of these surfaces we apply one of the pants decomposition algorithms  again in order to decompose it into a collection of pants.

\section{Choosing an Appropriate Scalar Function}
\label{choosing scalar}
We need to construct a function that suits the algorithms that we have presented. For the algorithm given in section \ref{alg1} we need a scalar function that has one global minimum and one global maximum. This can be done by solving a Laplace equation on a mesh with Dirichlet boundary condition. A scalar $f$ function that satisfies the Laplace equation $\Delta f=0$ is called \textit{harmonic}. We are seeking here is a scalar function $f$ which satisfies the Laplace equation $\Delta f=0$ subject to the Dirichlet boundary conditions $f(v_i)=c_i$ for all $v_i \in  V_C$.  Here $V_C \subset V$ is a set of constrained vertices and $c_i$ are known scalar values providing the boundary conditions. This system has a unique solution provided $|V_C|\geq 2$. Furthermore, the solution for such a system has an important property that it has no local extrema other than the constrained vertices. This property of harmonic functions is usually called the maximum principle~\cite{rosenberg1997laplacian}. Designing such a function is possible in practice. Recall that on triangulated mesh $M$ the standard discretization for the Laplacian operator at a vertex $v_i$ is given by :
\begin{equation*}
\Delta f(v_i)=\sum_{[v_i,v_j]\in M}w_{ij}(f(v_j)-f(v_i)), 
\end{equation*}
where $w_{ij}$ is a scalar weight assigned to the edge $[v_i,v_j]$ such that $\sum_{[v_i,v_j]\in M}w_{,j}=1$. Choosing the weights $w_{ij}$ such that $w_{ij} >0 $ for all edges $[v_i,v_j]$ guarantees the solution of the Laplace equation has no local extrema other than at constrained vertices $V_C$ ~\cite{floater2003mean}. These conditions are satisfied by the\textit{ mean value weights}:
\begin{equation*}
w_{ij}=\frac{tan(\theta_{ij}/2)+tan(\phi_{ij}/2) }{||v_j-v_i ||},
\end{equation*}
where the angles $\theta_{ij}$ and $\phi_{ij}$ are the angles on either sides of the edge $[v_i,v_j]$ at the vertex $v_i$. Mean value weights are used to approximate harmonic map and they have the advantage that they are always non-negative which prevents any introduction of extrema on non-constrained vertices in the solution of the Laplace equation specified above. On the other hand, the cotangent weights may become negative in presence of oblique triangles and this can produce local extrema on non-constrained vertices. In this context see also \cite{wardetzky2007discrete} for various discretizations of the Laplace-Beltrami operator and their properties. In our Morse function-based algorithm for pants decomposition we needed a scalar function that has precisely one global minimum and one global maximum. Hence, the constrained vertices $V_C$ is chosen to have exactly two vertices $V_C=\{v_{min},v_{max}\}$ such that $f(v_{min})<f(v_{max})$. This choice will guarantee that the solution $f$ has a single minimum at $v_{min}$ and a single maximum at $v_{max}$. Hence, every other critical point for $f$ must be a saddle point which is a requirement for \ref{alg1}. Similarly, a harmonic scalar function can be used to obtain the pants decomposition algorithm \ref{with boundary case} for a surface with multiple boundary components.  \\

Even though our second algorithm works on a generic Morse function, we choose a function that captures the geometry and the symmetry aspects of the mesh. This was not possible in the previous algorithm due to the restriction of the input function. Our choice for scalar functions were made based on the object itself. For organic objects like the human body we choose the Poisson field \cite{dong2005harmonic}. On the other hand, for objects with some symmetry, we found that isometry invariant scalar functions \cite{sun2009concise,kim2010mobius,wang2014spectral} and multi-source heat kernel maps \cite{hajij2015constructing} give the best results.

\section{Experiments}
\subsection{Run-time comparison}
We ran our experiments on a $3.70$ GHz AMD(R) A6-6300 with 10.0 Gb memory.  We implement all the algorithms presented in Table \ref{Table} using C++ on a Windows platform. The algorithms were tested on different models and compared their run-time with a
publicly available software running the 
algorithm of \cite{li2009surface}. Table \ref{Table} shows that time comparison between these algorithms. The running times that are provided in the table includes the time needed for computing the scalar functions needed as an input for in out algorithms and exclude the time needed to compute the homology generators needed for the algorithm of \cite{li2009surface}. We did not include the pants decomposition algorithm presented in \cite{zhang2012optimizing} since it is not different fundamentally from the one in \cite{li2009surface}. The main purpose of the framework in \cite{li2009surface} is to enumerate different pants decomposition classes starting from an initial pants decomposition which is computed using the method given in \cite{li2009surface}. The algorithms that we presented here perform much better than the algorithm in \cite{li2009surface}. In particular, the Reeb graph algorithm gives us the best time efficiency.

\begin{center}
\begin{table}
\resizebox{\columnwidth}{!}{
\resizebox{0.04\textwidth}{!}{
\begin{tabular}{ |c|c|c|c|c|c| } 
\hline

Model& Vertices&Topology & Alg. $1$ time &  Alg. $2$ time. & Alg. \cite {li2009surface} time \\
\hline 
Eight &3070& $G=2$ &0.588 &0.322 &11.74\\ 
David & 26138 & $G=3$ &24.637&5.593 &64.43\\
$4$-Torus &10401 & $G=4$ &4.1495 &1.815 &4.52\\
Vase &10014 & $G=2$ & 2.793&1.224&4.12\\
Greek &43177 & $G=4$ &90.426 &10.567&275.2\\
Topology &6616 & $G=13$ &7.098&2.827 &Fail\\  
Knotty &5382 & $G=2$ &1.244 &0.606 &Fail\\ 
\hline
\end{tabular}}}\caption{Run-time in seconds; "Fail" means that the software crashes on the input mesh. The running time is in seconds.}\label{Table}
\end{table}
\end{center}

We mentioned in \ref{choosing scalar} that our choice of the scalar function relies on the geometry that we deal with. In Table \ref{Table2} we list our choices for the scalar functions that we used in our experimentation in the Reeb graph-based pants decomposition algorithm. We also list in the same table other scalar functions that we found potentially useful and could be used on the same model to produce similar results. In our Handle-based pants decomposition algorithm all scalar fields are Harmonic fields by our remark in Section \ref{choosing scalar}.

\begin{table}[h]
\resizebox{\columnwidth}{!}{
\resizebox{0.04\textwidth}{!}{
\begin{tabular}{ |c|c|c| }
\hline
Model&  Our choice of the scalar field& Other appropriate scalar fields   \\
\hline 
Eight &Laplacian Eigenfunctions & Harmonic functions,  \\ 
David &Poisson fields &  Harmonic functions  \\
$4$-Torus &Multi-source heat kernel maps& Isometry invariant fields \\
Vase &Harmonic functions& Poisson fields  \\
Greek &Harmonic functions& Poisson fields  \\
Topology &Isometry invariant fields& Poisson fields \\  
Knotty &Poisson fields & Harmonic functions \\
\hline
\end{tabular}}}\caption{The scalar fields used on each model in our experimentation for the Reeb graph-based algorithm.} \label{Table2} 
\end{table}
Figure \ref{two} shows examples of the scalar functions used in the input for our algorithms. The Figure also shows the critical points of the scalar functions.

\begin{figure}[h]\centering
  \includegraphics[width=0.5\textwidth]{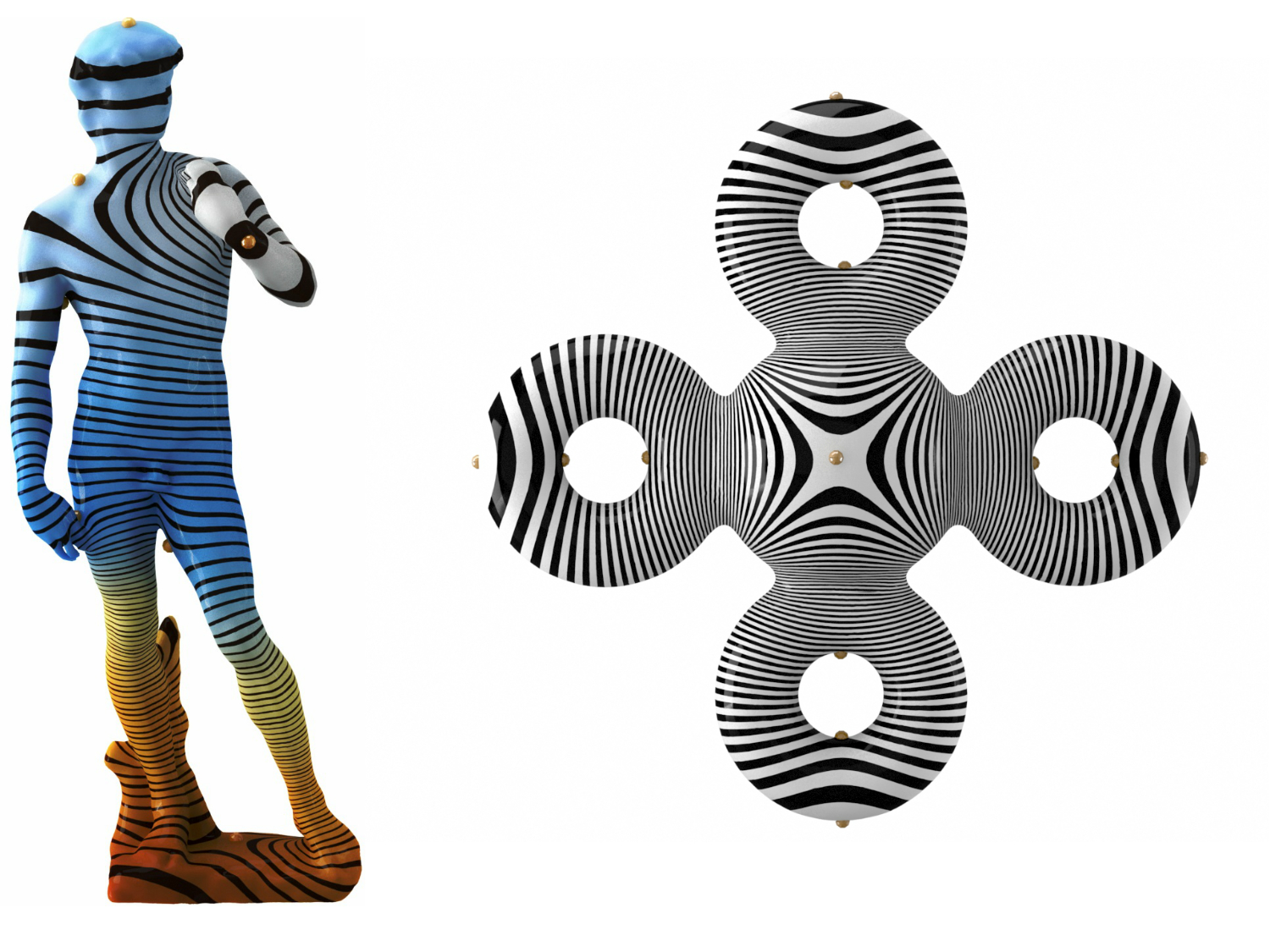}
   \caption{Two of the scalar functions used in the input for our algorithms. One the David model on the left a Poisson field is produced and on a multi-source heat kernel scalar field on the $4$-Torus model on the right.  }
	\label{two}
\end{figure}
\subsection{Imperfect input}
Just to test how robust are our algorithms against noise, we ran
our algorithms on surfaces corrupted with deliberate noise. 
We applied random displacements to the coordinates of the input mesh and tested the results. We applied noise amplitude up to $15\%$ 
of the bounding box length for the input mesh. Both algorithms 
maintained correct and meaningful outputs. 
See Figure \ref{noise} for some examples.
\begin{figure}[h]\centering
  \includegraphics[width=0.5\textwidth]{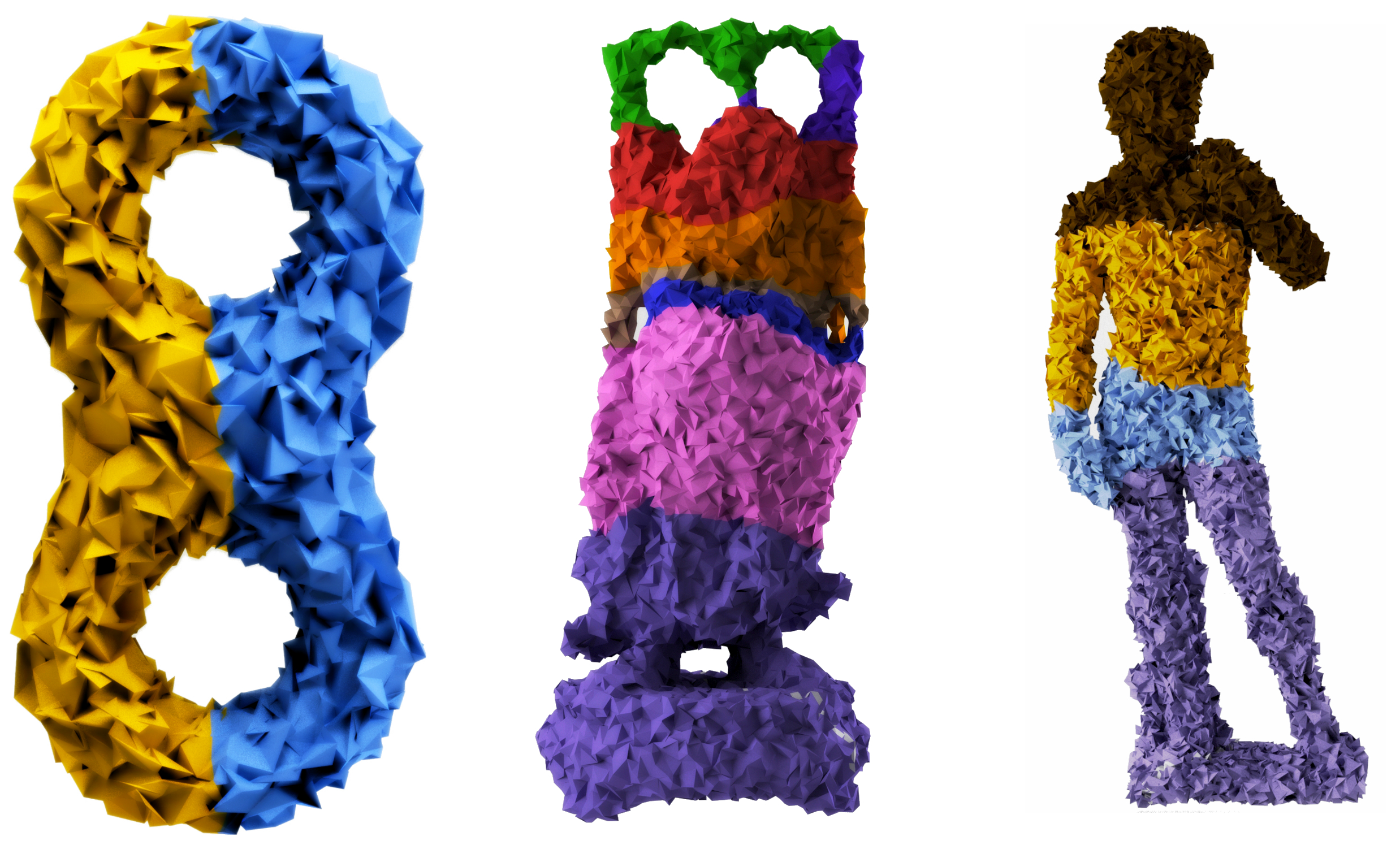}
   \caption{Pants decomposition for a noisy surface.}
	\label{noise}
\end{figure}

%

\section{Conclusion}
Morse theory is a powerful mathematical tool that uses local differential properties of a manifold to infer its global topological properties. 
Using a PL version of the theory, we have given two algorithms to decompose a surface mesh into pants. The algorithms we proposed here remove the constrains required by earlier algorithms and the run time much faster than the existing state-of-the-art method. \\

The choice of a Morse function has an impact on the output of the pants decomposition algorithms that we propose here. We have proposed some scheme to compute 
functions suitable for our pants decomposition algorithms. These scalar functions satisfies certain \textit{desirable} geometric properties. By the term \textit{desirable} we mean one or more of the following properties

\begin{enumerate}
\item The isolines of the scalar function are shape-aware in the sense that they follow one of the principal directions of the surface.

\item The critical points of the scalar function coincide with feature or the symmetry points on the surface.

\item If the surface has some sort of symmetry then the scalar field also inherits the symmetry of the surface.

\item Minimal user input.

\end{enumerate}

 However, are there alternative schemes which give better pants with some desirable geometric properties?
We intend to address this issue in future.





\bibliographystyle{plain}
\bibliography{sample}




\end{document}